\newtheorem{definition}{Definition}
\newtheorem{theorem}{Theorem}
\renewcommand*{\@opargbegintheorem}[3]{\trivlist
      \item[\hskip \labelsep{\bfseries #1\ #2}] \textbf{(#3)}\ \itshape}
\newtheorem{lemma}{Lemma}
\newtheorem{concentration conditions}{Concentration Conditions}
\newtheorem{example}{Example}
\newcommand{\FMIN}{BTG-SM-CA }
\newcommand{\FMAX}{BTG-IM-CA }
\newcommand{\RHOMIN}{ADG-SM-CA }
\newcommand{\RHOMAX}{ADG-IM-CA }
\newcommand{\selectbyf}{SSBT}
\newcommand{\selectbyrhof}{SSAD}
\newcommand{\OnlyInFull}[1]{#1}
\newcommand{\OnlyInShort}[1]{}
\renewcommand*{\@opargbegintheorem}[3]{\trivlist
      \item[\hskip \labelsep{\bfseries #1\ #2}] \textbf{(#3)}\ \itshape}
\newcommand{\I}{{\mathbb{I}}}
\begin{document}

\title{Cumulative Activation in Social Networks}
\numberofauthors{5}

\author{
\alignauthor
Xiaohan Shan\\
       \affaddr{Institute of Computing Technology, CAS}\\
       \email{shanxiaohan@ict.ac.cn}
\alignauthor
Wei Chen\\
       \affaddr{Microsoft}\\
       \email{weic@microsoft.com}
\alignauthor Qiang Li\\
       \affaddr{Institute of Computing Technology, CAS}\\
       \email{liqiang01@ict.ac.cn}
\and  
\alignauthor Xiaoming Sun\\
       \affaddr{Institute of Computing Technology, CAS}\\
       \email{sunxiaoming@ict.ac.cn}
\alignauthor Jialin Zhang\\
       \affaddr{Institute of Computing Technology, CAS}\\
       \email{zhangjl2002@gmail.com}
}
\maketitle
\begin{abstract}

Most studies on influence maximization focus on one-shot propagation, i.e. the influence is propagated from
	seed users only once following a probabilistic diffusion model and users' activation are determined via
	single cascade.
In reality it is often the case that a user needs to be cumulatively impacted
	by receiving enough pieces of information propagated to her 
	before she makes the final purchase decision.
In this paper we model such cumulative activation as the following process: first multiple pieces of information are propagated independently in the social network
	following the classical independent cascade model, then the user will be activated (and adopt the product) if the cumulative pieces of information she received
	reaches her cumulative activation threshold.
Two optimization problems are investigated under this framework: \emph{seed minimization with cumulative activation (SM-CA)}, which asks how to select a seed set with minimum size such that the number of cumulatively active
nodes reaches a given requirement $\eta$;
\emph{influence maximization with cumulative activation (IM-CA)}, which asks how to choose a seed set with fixed budget to maximize the number of cumulatively active nodes. 
For SM-CA problem, we design a greedy algorithm that yields a bicriteria $O(\ln n)$-approximation when $\eta=n$, where
	$n$ is the number of nodes in the network.
For both SM-CA problem with $\eta<n$ and IM-CA problem, we prove strong inapproximability results.
Despite the hardness results, we propose two efficient heuristic algorithms for SM-CA and IM-CA respectively based on the {\it reverse reachable set} approach. 
Experimental results on different real-world social networks show that our algorithms
significantly outperform baseline algorithms.
\end{abstract}

\keywords{social networks; independent cascade model; cumulative activation; influence maximization; seed minimization}

\section{Introduction}
With the wide popularity of social media and social network sites such as Facebook, Twitter, WeChat, etc.,
 social networks have become a powerful platform for spreading information, ideas and products among individuals.
In particular, product marketing through social networks can attract large number of customers.

Motivated by this background, influence diffusion in social networks has been extensively studied
	(cf. \cite{domingos2001mining,kempe2003maximizing,chen2013information}).
However, most of previous works only consider the influence after one-shot propagation --- influence propagates from
	the seed users only once and user activation or adoption is fully determined after this single cascade.
 In contrast, in the real world, people often make decisions after they have cumulated many pieces of information about
	 a new technology, new product, etc., and these different pieces of information are propagated in the network independently as different
	 information cascades.
	
 Consider the following scenario:
 A company is going to launch a new version (named as V11 for convenience) of their product with many new features,
 but most people are not familiar with these new features.
 Thus, it is often beneficial that the company conduct a series of advertisement and marketing campaigns covering different
	 features of the product.
An effective way of marketing in a social network is to select influential users as seeds to initiate the information cascades of these
	campaigns.
From potential customers' perspective,
	when they receive the first piece of information about V11 from their friends,
	they may find it interesting and forward it to their friends, but this may not necessarily lead to their purchase actions.
Later they may receive and be impacted by further information about V11, and when they are impacted by enough pieces of information cascades,
	they may finally decide to buy the new product.
	
We model the above behavior by an integrated process consisting of two phases:
	(a) repeated information cascades, and
	(b) threshold-based user adoptions.
First, there are multiple information cascades about multiple pieces of production information in the network.
We model information cascades by the classical independent cascade (IC) model \cite{kempe2003maximizing}:
A social network is modeled as a weighted directed graph, with an influence probability as
the weight on every edge.
Initially, some nodes are selected as seeds and become active, and all other nodes are inactive.
At each step, newly activated nodes have one chance to influence each of their inactive out-neighbors
	with the success probability given on the edge.
Independent cascade model is suitable to model simple contagions~\cite{centolaM07,chen2013information}
	such as virus and information propagation,
	and thus we adopt it to model information cascades in the first phase.
We consider multiple pieces of production information propagates independently following the IC model.
For the second phase, we assume that there is a threshold for each user, who
	will adopt the product if the amount of information
	that she receives in the first phase exceeds her threshold.
We measure the amount of information a user received as the fraction of information cascades that reaches the user,
	which is equivalent to the probability of the user being activated in an information cascade.
A node is {\em cumulatively activated} if this probability exceeds the threshold.
We refer to this model as the {\em cumulative activation (CA)} model.

Given the above cumulative activation model, the company may face one of the
	following two objectives:
	either the company has a fixed budget to activate the seed nodes, and wants
	to maximize the number of cumulative active nodes, or the company needs to reach
	a predetermined number of cumulative active nodes, and wants to minimize
	the number of seeds.

We formulate the above scenarios as the following two optimization problems: Seed minimization with cumulative activation (SM-CA) and influence maximization with cumulative activation (IM-CA). Given a directed graph with a probability on each edge and a threshold for each node, an activation requirement $\eta$ and a budget $k$, the SM-CA problem is to find a seed set
with minimum size such that the number of cumulatively activated nodes is at least $\eta$. The
IM-CA problem is to find a seed set with $k$ nodes such that the number of cumulatively activated nodes is maximum.

Let $\rho(S)$ denote the number of cumulative activated nodes given seed set $S$.
We first show that set function $\rho(\cdot)$ is not submodular,
 which means unlike most of the current studies, we cannot guarantee the approximation ratio by using the greedy algorithm directly.

For SM-CA problem, we consider the case $\eta=n$ and $\eta<n$ separately, where $n$ is
the number of nodes in the network and $\eta$ is the activation requirement.
The complexity results of these two cases are quite different.
 When $\eta=n$, we show while it is NP-hard to approximate SM-CA problem within factor $(1-\varepsilon)\ln n$  for any $\varepsilon>0$,
	 we can achieve a bicriteria $O(\ln n)$-approximation.
Our technique is to replace the nonsubmodular $\rho(S)$ with a submodular surrogate function $f(S)$, and show that the set
	of feasible solutions to the original SM-CA problem with constraint $\eta = n$ is exactly the same as the set of feasible
	solutions for $f(S)$ to assume its maximum value, and then we can apply the greedy algorithm to the surrogate $f(S)$
	instead of $\rho(S)$ to provide the theoretical guarantee.
When $\eta<n$, we construct a reduction from the {\em densest $k$-subgraph problem} to SM-CA problem and show that
SM-CA problem cannot be approximated within factor $\frac{1}{\sqrt{6}}n^{\delta/2}$
if the densest $k$-subgraph problem cannot be approximated within $n^\delta$, for any $\delta>0$,
	which is commonly acknowledged as
	a hard problem for some small $\delta$.

For IM-CA problem, we construct a reduction from the Set Cover problem and prove that it is NP-hard to approximate IM-CA problem within a factor of $n^{1-\varepsilon}$ for any $\varepsilon>0$.

Despite the approximation hardness on the SM-CA problem with $\eta<n$ and the IM-CA problem,
	we may still need practical solutions for them.
For this purpose, we propose some heuristic algorithms, which utilize the state-of-the-art approach in influence maximization,
	namely the \emph{Reverse Reachable Set (RR set)} approach \cite{borgs2014rrset,mtai2016sigmod,tang2014newrrset,tang2015rrset},
	to improve the efficiency of the algorithms comparing to the old greedy algorithms based on naive Monte Carlo simulations.

Finally, we conduct experiments on three real-world social networks to test the performance of our algorithms.
Our results demonstrate that one heuristic algorithm proposed consistently out-performs all other algorithms under comparison in
	all test cases and clearly stands out as the winning choice for both the SM-CA and IM-CA problems.

To summarize, our contributions include:
 (a) we propose the seed minimization and influence maximization problem under cumulative
activation (SM-CA problem and IM-CA problem respectively),
	which is a reasonable model for purchasing behavior of customers exposed to repeated information cascades;
 (b)we design an $O(\ln n)$ approximate algorithm for SM-CA problem when $\eta=n$;
 (c) we show strong hardness results for SM-CA problem with $\eta<n$ and IM-CA problem;
 (d) we propose efficient heuristic algorithms and validate them through extensive experiments on real-world datasets and
	 conclude that one heuristic is the best choice for both SM-CA and IM-CA problems.
%
%

\subsection{Related Work}
The classical influence maximization problem is to find a seed set of at most $k$ nodes to maximize the expected number of active nodes.
 It is first studied as an algorithmic problem by Domingos and Richardson \cite{domingos2001mining}
and Richardson and Domingos~\cite{richardson2002mining}. Kempe et al. \cite{kempe2003maximizing}
first formulate the problem as a discrete optimization problem.
 They summarize the independent cascade model and the linear threshold model,
and obtain approximation algorithms for influence maximization by applying submodular
	function maximization.
 Extensive studies follow their approach and provide more efficient algorithms
 \cite{ChenWY09efficientinfluence,chen2010sharpphard,Leskovec2007costeffective}.
 Leskovec et al. \cite{Leskovec2007costeffective} present a ``lazy-forward" optimization method in selecting new seeds,
 which greatly reduce the number of influence spread evaluations. Chen et al.
	 \cite{ChenWY09efficientinfluence,chen2010sharpphard} propose scalable algorithms
which are faster than the greedy algorithms proposed in \cite{kempe2005influential}. Recently, Borgs et al. \cite{borgs2014rrset} and Tang et al. \cite{tang2015rrset,tang2014newrrset} and Nguyen et al. \cite{mtai2016sigmod} propose a series of more effective algorithms for influence maximization in large social networks that both has theoretical guarantee and practical efficiency.
The approach is based on the ``Reverse Reachable Set'' idea first proposed in \cite{borgs2014rrset}.

Another aspect of influence problem is seed set minimization, Chen \cite{chen2009approximability} studies the seed minimization
problem under the fix threshold model and shows some strong negative results for this model.
 Long et al. \cite{long2011minimizing} also study independent cascade model and linear threshold model
from a minimization perspective.
 In \cite{goyal2012minimizing}, Goyal et al. study the problem of finding the
minimum size of seed set such that the expected number of active nodes reaches a given threshold,
 they provide a bicriteria approximation algorithm for this problem.
 Zhang et al. \cite{zhang2014probabilistic} study the seed set minimization problem
with probabilistic coverage guarantee, and design an approximation algorithm for this problem.
 He et al. \cite{he2014positive} study positive influence model under single-step
activation and propose an approximation algorithm. Note that, the work in \cite{he2014positive} is a special case of our work.

Beyond influence maximization and seed minimization,
 another interesting direction is the learning of social influence over
real online social network data set, e.g.
	influence learning in blogspace \cite{Gruhl2004blogspace} and academic collaboration
network \cite{tangjie2009largescale}.

Most early studies on influence maximization and influence learning are summarized in the monograph \cite{chen2013information}.
However, almost all the existing studies consider only node activation after a single information or influence cascade.
Our work differentiate with all these studies on this important aspect, as discussed in the introduction.

\emph{Paper organization.} We formally define the diffusion model and the optimization problems SM-CA and IM-CA
in Section \ref{sec:model}. The approximation algorithms and hardness results of these two problems
are proposed in Section \ref{sec:algorithmsandhardness}, including a greedy algorithm
for SM-CA problem with $\eta=n$ in section \ref{sec:eta=n}, the hardness result of SM-CA problem with $\eta<n$ in Section \ref{sec:eta<n}
and the inapproximate result of IM-CA problem in Section \ref{sec:im-ca}.
 In Section \ref{sec:heuristicalgorithms}, we present two heuristic algorithms for SM-CA problem and two heuristic algorithms for IM-CA problem.
 Section \ref{sec:experiments} shows our experimental results on real-world datasets.
 We summarize the paper with some further directions in Section 6.

\section{Model and Problem Definitions}\label{sec:model}

Our social network is defined on a directed graph $G=(V, E)$, where $V$ is the set of nodes representing individuals
and $E$ is the set of directed edges representing social ties between pairs of individuals.
 Each edge $e=(u, v)\in E$ is associated with an influence probability $p_{uv}$, which represents the probability that $u$
	 influences $v$.

The entire activation process consists of information diffusion process and node activation.
The information diffusion process follows the independent cascade (IC) model proposed by Kempe et al. \cite{kempe2003maximizing}.
 In the IC model, discrete time steps $t=0, 1, 2, \cdots$ are used to model the diffusion process. Each node in $G$ has two states:
 inactive or active, At step 0, a subset $S\subseteq V$ is selected as seed set and
	 nodes in $S$ are active directly, while nodes not in $S$ are inactive.
For any step $t\ge 1$, if a node $u$ is newly active at step $t-1$,
 then $u$ has a single chance to influence each of its inactive out-neighbor $v$ with independent probability $p_{uv}$
	 to make $v$ active.
 Once a node becomes active, it will never return to the inactive state.
 The diffusion process stops when there is no new active nodes at a time step.

The above basic IC model describe the diffusion of one piece of information, but actually there could be many pieces of information
	about a product being propagated in the network, all following the same IC model.
Users' final production adoption is based on cumulative information collected, which we refer to as {\em cumulative activation (CA)}
	and is described below, and it is different from the user becoming active for one piece of information specified above in the IC model.
Let $P_u(S)$ be the probability that $u$ becomes active after an information cascade starting from the seed set $S$.
Since $P_u(S)$ also represents the fraction of information accepted by $u$ in multiple cascades, we use $P_u(S)$ to define
	cumulative activation:
Suppose that each node $u\in V$ has an activation threshold $\tau_u \in (0,1]$, then $u$ becomes \emph{cumulatively active} if $P_u(S)\geq \tau_u$.
Given a target set $U\subseteq V$ and a seed set $S$, let $\rho_U(S)$ be the number of cumulatively active nodes in $U$ from seed set $S$.
When $U=V$, we omit the subscript $U$ and use $\rho(S)$ directly.

We consider two optimization problems under cumulative activation, seed minimization with cumulative activation (SM-CA)
and influence maximization with cumulative activation (IM-CA).
 SM-CA aims at finding a seed set $S$ with minimum size such that there are at least $\eta$ $(\eta\leq n)$ nodes in the target set become cumulatively active.
 IM-CA is the problem of finding a seed set of size $k$ to maximize
the number of cumulatively active nodes in the target set. The formal definitions are as follows.

\begin{definition}[Seed minimization with cumulative activation]
In the seed minimization with cumulative activation (SM-CA) problem,
 the input includes a directed graph $G=(V,E)$ with $|V|=n,~|E|=m$,
 an influence probability vector $P=\{p_{uv}: p_{uv}\in [0, 1],~(u, v)\in E\}$,
a target set $U\subseteq V$, an activation threshold $\tau_u\in (0, 1]$ for each node $u\in U$ and a coverage requirement $\eta\leq |U|$.
 Our goal is to find the minimum size seed set $S^*\subseteq V$ such that at least $\eta$ nodes in U can be cumulatively activated,
that is,
\begin{center}
$S^*=\underset{S:\rho_U(S)\geq \eta}{\arg \min}~|S|$.
\end{center}
\end{definition}

\begin{definition}[Influence maximization with cumulative activation]
In the influence maximization with cumulative activation (IM-CA) problem,
 the input includes a directed graph $G=(V, E)$ with $|V|=n,~|E|=m$,
 an influence probability vector $P=\{p_{uv}: p_{uv}\in [0, 1],~(u,v)\in E\}$,
 a target set $U\subseteq V$, an activation threshold $\tau_u\in (0, 1]$ for each node $u$ and a size budget $k\leq n$.
 Our goal is to find a seed set $S^*\subseteq V$ of size $k$ such that the number of cumulatively active nodes in $U$ is maximized,
 that is,
\begin{center}
$S^*=\underset{S:|S|=k}{\arg \max}~\rho_U(S)$.
\end{center}
\end{definition}

\subsection{Equivalence to Frequency-based Definition } \label{sec:furtherdiscussion}
Suppose there are $N$ diffusions, which lead to final cumulative activation.
Intuitively, a node $u\in V$ becomes cumulative activated when
    the number of times that $u$ becomes influenced in these $N$ diffusions is larger than a threshold.
Formally, given a seed set $S$ and a node $u\in V$, let $X_u^i$ be a random variable defined as follows:
    $X_u^i(S)=1$ if $u$ is influenced in the $i$-th diffusion and $X_u^i(S)=0$ otherwise.
Thus, \begin{small}$X_u(S)=\sum_{i=1}^{N}X_u^i(S)$\end{small} denotes the number of times that $u$ becomes influenced after $N$ diffusions, and
	 $X_u(S)/N$ is the influence frequency of $u$.
By Hoeffding's inequality, we show the relationship between $p_u(S)> \tau_u$ and $X_u(S) /N \geq \tau_u$ in Lemma \ref{lem:modeldiscussioncumulative}.
\begin{lemma}\label{lem:modeldiscussioncumulative}
Given a seed set $S$, a node $u\in V$ and  a large enough $N$,
	(a) if $p_u(S)> \tau_u$, then \begin{small}$\Pr(X_u(S)/N\geq \tau_u)=1-o(1)$\end{small}; and
	(b) if $p_u(S)< \tau_u$, then \begin{small}$\Pr(X_u(S)/N\leq \tau_u)=1-o(1)$\end{small}, where $o(1)$ is asymptotic
	to the number of diffusions $N$.
\end{lemma}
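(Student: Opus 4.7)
The plan is to apply Hoeffding's inequality directly. The key observation is that the $N$ diffusion cascades are independent and identically distributed, and in each cascade the event ``$u$ is influenced'' occurs with probability exactly $p_u(S)$. Therefore $X_u^1(S), \ldots, X_u^N(S)$ are i.i.d.\ Bernoulli random variables with mean $p_u(S)$, taking values in $[0,1]$, and $X_u(S)/N$ is their empirical mean. Hoeffding's inequality then gives, for every $\varepsilon > 0$,
\begin{equation*}
\Pr\!\left(\left|X_u(S)/N - p_u(S)\right| \geq \varepsilon \right) \leq 2\exp(-2N\varepsilon^2).
\end{equation*}

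For part~(a), since $p_u(S) > \tau_u$ by assumption, set $\varepsilon = p_u(S) - \tau_u > 0$. The one-sided Hoeffding bound yields $\Pr(X_u(S)/N \leq \tau_u) = \Pr(p_u(S) - X_u(S)/N \geq \varepsilon) \leq \exp(-2N\varepsilon^2)$, which is $o(1)$ as $N \to \infty$. Taking complements gives $\Pr(X_u(S)/N > \tau_u) = 1 - o(1)$, and the conclusion follows since $>$ implies $\geq$. For part~(b), symmetrically choose $\varepsilon = \tau_u - p_u(S) > 0$; the analogous one-sided bound gives $\Pr(X_u(S)/N \geq \tau_u) \leq \exp(-2N\varepsilon^2) = o(1)$, hence $\Pr(X_u(S)/N \leq \tau_u) = 1 - o(1)$.

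There is no real obstacle here: $\varepsilon$ is a positive constant determined by the strict inequality between $p_u(S)$ and $\tau_u$, so the Hoeffding tail $\exp(-2N\varepsilon^2)$ is legitimately $o(1)$ in $N$. The only subtlety worth noting is the boundary case $p_u(S) = \tau_u$, which the lemma explicitly excludes (consistent with the hypothesis of strict inequality in both parts). Thus the proof amounts to invoking Hoeffding with the correct gap parameter in each direction and recording the asymptotic rate.
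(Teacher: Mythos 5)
Your proof is correct and follows essentially the same route as the paper's: both apply Hoeffding's inequality to the sum $X_u(S)$ of the $N$ independent indicators, using the constant gap $|p_u(S)-\tau_u|$ as the deviation parameter to get an exponentially small failure probability. If anything, your exponent $-2N\varepsilon^2$ is the correctly normalized Hoeffding bound, while the paper's displayed exponent $-2N^3(\tau_u-P_u(S))^2$ contains a normalization slip; the $o(1)$ conclusion is unaffected either way.
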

\begin{proof}
It is obvious that the expectation of $X_u(S)$ is $E[X_u(S)]=\sum_{i=1}^{N}E[X_u^i(S)]=Np_u(S)$.
When $P_u(S)< \tau_u$, by Hoeffding's inequality, we have:
\begin{equation*}
\begin{aligned}
&\Pr (X_u(S)/N\geq \tau_u)\\
=&\Pr(X_u(S)-E[X_u(S)]\geq N\tau_u-E[X_u(S)])\\
\leq& \exp(-2N(N\tau_u-E[X_u(S)])^2)\\
=&\exp(-2N^3(\tau_u-P_u(S))^2)=o(1).\\
\end{aligned}
\end{equation*}
Thus, when $N$ is large enough, $X_u(S)/N\leq \tau_u$ is a high probability event if $p_u(S)< \tau_u$.
Similarly, $X_u(S)/N\geq \tau_u$ is a high probability event if $p_u(S)> \tau_u$.
\end{proof}

Based on Lemma \ref{lem:modeldiscussioncumulative}, the formal definition of cumulative activation
    is consistent with our motivation.
\subsection{Comparison with IC and LT Models}

We first explain the differences between the CA model the IC model.
CA model uses IC model as information cascades in its first stage, and thus the main difference is at the
	determination of which nodes are finally activated, or simply at the objective function.
This is clearly illustrated by the simple example in Figure~\ref{comparision of CA and IC}, which shows a five-node graph
	with edge probabilities shown next to edges.
In the IC model, it is clear that the influence spread of $v_1$ and $v_2$ are the same:
	$\sigma(\{v_1\}) = \sigma(\{v_2\}) = 2.2$.
For the CA model, if every node has the activation threshold as $0.3$, then $\rho(\{v_1\}) = 2$ and
	$\rho(\{v_2\}) = 4$, because $v_1$ can only activate itself and $u_1$, while $v_2$ can activate
	itself and $u_1,u_2,u_3$.
If the activation threshold of every node is increased to $0.6$, then $\rho(\{v_1\}) = 2$ and
	$\rho(\{v_2\}) = 1$.
Therefore, if we want to select one seed in the influence maximization task, for IC model either $v_1$
	or $v_2$ is fine, but for the CA model $v_1$ or $v_2$ is selected based on different threshold values.
This means the influence maximization task under CA model is different from the task under the IC model.
The above example also provides the intuition that the influence maximization task under the IC model
	focuses on the average effect of the influence, while the task under the CA model may need to
	select either nodes that has wide but average influence (e.g. $v_2$) or nodes with concentrated
	influence (e.g. $v_1$) based on the threshold setting.

\begin{figure*}[t]
  \subfigure[CA \& IC]
  {
    \label{comparision of CA and IC}
    \includegraphics[width=4cm]{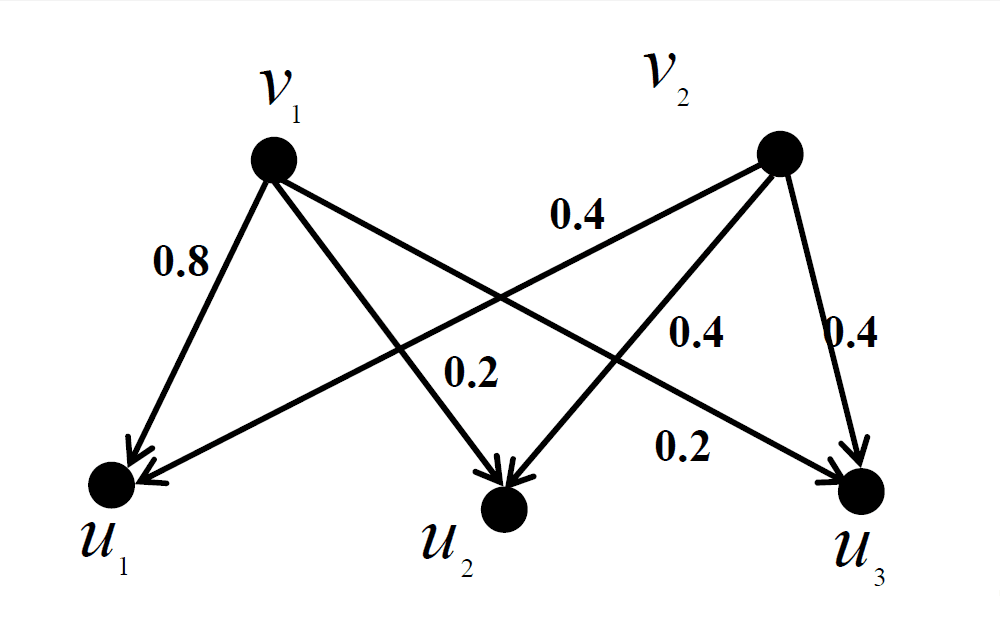}\\
  }
  \subfigure [nonsubmoduarity]
  {
    \includegraphics [width=3.5cm]{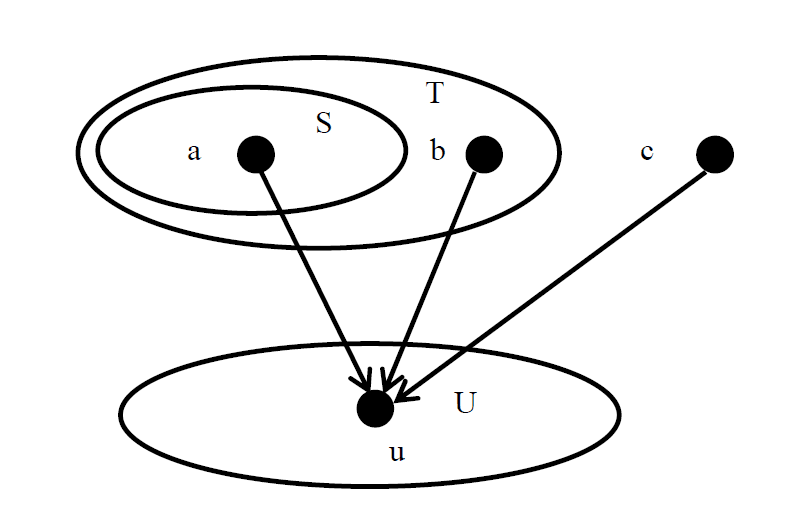} \\
    \label{fig:nonsub}
  }
  \subfigure [$\eta<n$]
  {
    \includegraphics [width=3.8cm]{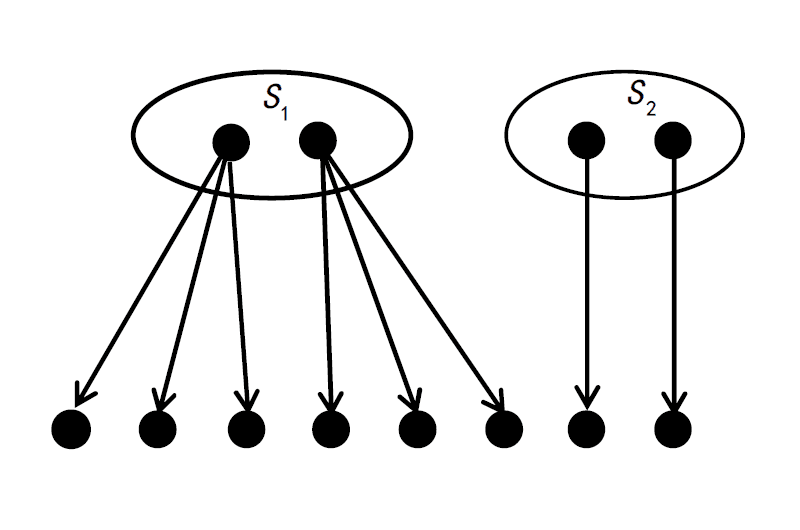} \\
    \label{fig:etaln}
  }
  \caption {Figures for understanding the model}
  \label{fig:model undersanding}
\end{figure*}


We next distinguish our CA model with the
	popular linear threshold (LT) model proposed in~\cite{kempe2003maximizing}.
In the LT model, each edge $(u,v)$ has a weight $w_{uv}\in [0,1]$ with $\sum_u w_{uv} \le 1$
($w_{uv}=0$ if $(u,v)$ is not an edge).
Each node $v$ has a threshold $\theta_v$, which is drawn from $[0,1]$ uniformly at random before the
propagation starts.
Then, starting from the seed set $S$, an inactive node $v$ becomes active at time $t\ge 1$ if any only if
the total weights from its active in-neighbors exceeds $v$'s threshold:
$\sum_{u\in S_{t-1}} w_{uv} \ge \theta_v$, where $S_t$ is the set of active nodes at time $t$
with $S_0 = S$.

Despite the superficial similarity on using thresholding to model user adoption behavior, the two models
	are quite different.
One key difference is that in the LT model what is being propagated are the user adoption behavior, while
	in the CA model, what is being propagated are multiple pieces of information about a product, and user's
	adoption in the end is based on the information received.
This is actually the difference between CA and most other models on influence diffusion, as discussed in
	the introduction.
This further leads to a specific difference between LT and CA:
	the threshold in the LT model is on the number of friends who already
	adopt a product, while the threshold in the CA model is on the fraction of information cascades
	that reach a user.
Finally, in LT the threshold $\theta_v$ is a random number in $[0,1]$, making the influence spread
	objective function submodular, while in CA the threshold $\tau_v$ is fixed as an input, causing
	the objective function $\rho(S)$ not submodular, as discussed in the next section.

%
\section{\leftline {Algorithms and hardness results}}
\label{sec:algorithmsandhardness}

In this section, we provide algorithmic as well as hardness results for SM-CA problem and IM-CA problem.

A set function $f:2^V\rightarrow R$ is {\em monotone} if
	$f(S) \le f(T)$ for all $S\subseteq T \subseteq V$, and
	{\em submodular} if $f(S\cup\{w\})-f(S)\geq f(T\cup\{w\})-f(T)$ for all $S\subseteq T\subseteq V$ and $w\notin T$.
It is well known that monotone submodular functions leads to a good approximation ratio by using the greedy algorithm
	\cite{NWF78}, and indeed
	 most of the existing work on social influence takes advantage of this nature
(e.g. \cite{borgs2014rrset,chen2015ec,goyal2012minimizing,kempe2003maximizing}).

Unfortunately, our objective function $\rho_U(S)$ is monotone but not submodular in general as shown in the example
	below, which makes our problems much harder.

\begin{example} (Figure \ref{fig:nonsub})\label{eg:rho(S)not submodular}
 Suppose $G$ is a bipartite graph and the influence probability on each edge is $1/2$,
 the activation thresholds are: $\tau_u=7/8$, $\tau_a=\tau_b=\tau_c=1$.
 Let $U=\{u\}$.
 Let $S=\{a\}$ and $T=\{a, b\}$,
 then, $\rho_U(S)=0$, $\rho_U(T)=0$, $\rho_U(S\cup \{c\})=0$, $\rho_U(T\cup \{c\})=1$.
 Thus, $\rho_U(S\cup \{c\})-\rho_U(S)=0<\rho_U(T\cup \{c\})-\rho_U(T)=1$, implying that
 $\rho_U(S)$ is not submodular. We further remark that in this example, if we set $U=V$,
	 the function $\rho(S)=\rho_V(S)$ is still not submodular.
 \end{example}


In the rest part of this section, we consider how to design approximation algorithms for SM-CA problem and IM-CA problem as well as the computational complexity of them.

\subsection{Seed minimization with cumulative activation (SM-CA) problem}
\label{sec:sm-ca}
In this section, we study SM-CA problem. We first show the hardness result of SM-CA problem in Theorem \ref{thm:hardness for SM-CA}.

\begin{theorem}\label{thm:hardness for SM-CA}
SM-CA problem is NP-hard. Moreover,
 SM-CA problem cannot be approximated within $(1-\varepsilon)\ln \eta$ in polynomial time unless $NP\subseteq DTIME(n^{O(\log{\log n})})$, $\forall~\varepsilon>0$.
\end{theorem}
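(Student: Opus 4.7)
\medskip
\noindent\textbf{Proof proposal.} The plan is to reduce from the Set Cover problem and invoke Feige's classical theorem, which states that Set Cover cannot be approximated within $(1-\varepsilon)\ln n'$ (where $n'$ is the size of the universe) unless $NP\subseteq DTIME(n^{O(\log\log n)})$. Since Set Cover is NP-hard, the same reduction will simultaneously give NP-hardness of SM-CA.

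\medskip
\noindent\textbf{The reduction.} Given a Set Cover instance with universe $\mathcal{U}=\{e_1,\dots,e_{n'}\}$ and sets $\mathcal{S}=\{S_1,\dots,S_m\}$, I would construct a bipartite directed graph $G=(V,E)$ as follows. Create a ``set node'' $v_i$ for each $S_i$ and an ``element node'' $u_j$ for each $e_j$. For every pair $(i,j)$ with $e_j\in S_i$, add the directed edge $(v_i,u_j)$ with influence probability $p_{v_iu_j}=1$. No other edges exist. Choose the target set $U=\{u_1,\dots,u_{n'}\}$, set the activation threshold $\tau_{u_j}=1$ for every $u_j\in U$, and set the coverage requirement $\eta=n'$.

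\medskip
\noindent\textbf{Equivalence of optima.} With probability-$1$ edges, for any seed set $S\subseteq V$ the activation probability is
\begin{equation*}
P_{u_j}(S)=\begin{cases}1 & \text{if some $v_i\in S$ satisfies $e_j\in S_i$, or if $u_j\in S$,}\\ 0 & \text{otherwise.}\end{cases}
\end{equation*}
Thus $u_j$ is cumulatively active iff either $u_j$ is a seed or some $v_i\in S$ has $e_j\in S_i$. Consequently, to activate all $n'$ nodes in $U$, picking set nodes $\{v_i : i\in T\}$ corresponds exactly to choosing a subcollection $\{S_i : i\in T\}$ that covers $\mathcal U$, while picking an element node $u_j$ as a seed only activates that single node. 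Since replacing any seed $u_j$ by any $v_i$ with $e_j\in S_i$ can only help, one can show that an optimal SM-CA solution can be taken to use only set nodes, and its size equals the optimum of the Set Cover instance. I would formalize this as a two-way size-preserving correspondence between SM-CA solutions and set covers.

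\medskip
\noindent\textbf{Transferring the inapproximability.} Suppose for contradiction that SM-CA admits a polynomial-time $(1-\varepsilon)\ln\eta$-approximation. Applying it to the constructed instance, where $\eta=n'$, yields a seed set of size at most $(1-\varepsilon)(\ln n')\cdot\mathrm{OPT}_{\text{SC}}$ whose induced subcollection (obtained by swapping any element-node seeds as described above) covers $\mathcal U$. This contradicts Feige's theorem unless $NP\subseteq DTIME(n^{O(\log\log n)})$. NP-hardness follows by the same reduction from the decision version of Set Cover.

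\medskip
\noindent\textbf{Anticipated obstacle.} The reduction itself is straightforward; the main subtlety is the ``swapping'' argument that certifies optimal SM-CA solutions can be assumed to consist only of set nodes, so that $\eta=n'$ in the SM-CA instance matches the universe size $n'$ on which Feige's bound is stated. Getting this transformation to preserve the solution size exactly (rather than only up to a constant factor) is what makes the $(1-\varepsilon)\ln\eta$ bound tight, and it will require care when some element $e_j$ lies in no set of the instance (which one can preprocess away without loss of generality).
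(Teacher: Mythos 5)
Your proposal is correct and takes essentially the same approach as the paper: the paper reduces from Partial Set Cover using exactly your bipartite construction (set nodes and element nodes, probability-$1$ edges, thresholds $\tau=1$, target set equal to the element nodes) and then invokes Feige's theorem, while you specialize to full Set Cover with $\eta=n'$, which is the same hard core of that reduction. If anything, your explicit swapping argument (replacing element-node seeds by set nodes) is more careful than the paper's brief claim that the objective values of the two instances ``are always the same.''
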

\OnlyInShort{
\begin{proof}[sketch]
The proof is by an objective-value-preserving reduction from the Partial Set Cover (PSC) problem, which is shown by Feige that
	it cannot be approximated within a factor of $(1-\varepsilon)\ln \eta$ $(\forall~ \varepsilon>0)$ in polynomial time unless $NP\subseteq DTIME(n^{O(\log{\log n})})$ \cite{feige98setcoverhardness}.
\end{proof}
}

\OnlyInFull{
\begin{proof}
We construct a reduction from the Partial Set Cover (PSC) problem. An instance of PSC problem $I_{PSC}=(U,\cal{S},\eta)$ consists of a ground set $U$ and a family of subsets $\cal{S}\subseteq$ $2^U$ and a coverage requirement $\eta\leq |U|$. The objective is to find a subcollection $\cal{C}\subseteq \cal{S}$ such that $|\bigcup\cal{C}|\geq \eta$ and $|\cal{C}|$ is minimized.

Given any instance of PSC problem $I_{PSC}=(U,\cal{S},\eta)$, we construct an instance of SM-CA problem $I_{SM-CA}=(G,\eta)$ as follows.
The PSC instance is reduced to a bipartite graph $G=(V,W,E)$, in which each node $v\in V$ corresponds to a subset $C_v\in\cal{S}$ one to one, each node $w\in W$ corresponds to an element $u_w\in U$ one to one, there is an edge $(v,w)\in E$ if and only if $u_w\in C_v$. The target set is $W$, the influence probability on each edge is 1, the activation thresholdof each node in $W$ is 1.
The activation requirement is $\eta$, which is the same with the coverage requirement in the PSC instance.

Based on the above construction, it is easy to check that the objective function values for any given S in the instances
	$I_{SM-CA}$ and $I_{PSC}$ are always the same.
Which means the two problems should have the same approximation ratio.
For the PSC problem, Feige showed that it cannot be approximated within a factor of $(1-\varepsilon)\ln \eta$ $(\forall \varepsilon>0)$ in polynomial time unless $NP\subseteq DTIME(n^{O(\log{\log n})})$ \cite{feige98setcoverhardness}. Therefore, SM-CA problem has the same computational complexity.
\end{proof}
}
Based on the hardness result of SM-CA problem, our next goal is to design an algorithm with approximation ratio close to $\ln \eta$.
Surprisely, it turns out that the results are quite different between ``activating all nodes'' ($\eta=n$) and ``partial activation'' ($\eta<n$), as we discuss separately below.
\subsubsection{SM-CA problem with $\eta=n$}\label{sec:eta=n}



When $\eta=n$, we can design an algorithm with a bicriteria $O(\ln n)$-approximation, even though the objective function is not submodular.
The key idea to our solution is to find a submodular function $f(S)$ as the surrogate for the original nonsubmodular $\rho(S)$,
	as the following lemma specifies.

\begin{lemma}\label{lem:feasible:eta=n}
When $\eta=n$, a seed set $S$ is a feasible solution to the SM-CA problem if and only if $f(S)=\sum_{u\in V}\tau_u$, where $f(S)$ is
	a {\em surrogate function} defined as:
\begin{center}
$f(S)=\sum_{u\in V}\min\{P_u(S),\tau_u\}$.
\end{center}
\end{lemma}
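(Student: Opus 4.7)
The plan is a very short, direct argument exploiting the elementwise structure of $f(S)$. The key observation is that each summand $\min\{P_u(S), \tau_u\}$ is individually upper bounded by $\tau_u$, with equality characterizing exactly when $u$ is cumulatively active. So the whole statement reduces to a pointwise equality check.

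First I would record the trivial bound $\min\{P_u(S), \tau_u\} \le \tau_u$ for each $u \in V$, together with the characterization of equality: $\min\{P_u(S), \tau_u\} = \tau_u$ if and only if $P_u(S) \ge \tau_u$, i.e., if and only if $u$ is cumulatively active under $S$. Summing over $u \in V$ immediately gives $f(S) \le \sum_{u \in V} \tau_u$, and equality holds in the sum if and only if equality holds in every summand (since each term is bounded above by the corresponding $\tau_u$).

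Next I would tie this to the $\eta = n$ feasibility condition. By definition, $S$ is feasible for SM-CA with $\eta = n$ exactly when every node in $V$ is cumulatively active, i.e., $P_u(S) \ge \tau_u$ for all $u \in V$. By the pointwise characterization above, this is equivalent to $\min\{P_u(S), \tau_u\} = \tau_u$ for every $u$, which in turn is equivalent to $f(S) = \sum_{u \in V} \tau_u$. This proves both directions of the ``if and only if.''

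There is no real obstacle here; the lemma is a straightforward ``termwise saturation'' argument and does not use any property of the IC model beyond the definition of $P_u(S)$. The conceptual content, reserved for subsequent lemmas, will be (a) that $f$ is monotone and submodular (so that the standard greedy machinery applies) and (b) that $f(S) = \sum_u \tau_u$ is the maximum possible value of $f$, so that the greedy algorithm's covering guarantee translates into a bicriteria approximation for the original nonsubmodular problem.
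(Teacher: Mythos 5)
Your proof is correct and takes essentially the same route as the paper: the paper proves the forward direction by noting feasibility forces $P_u(S)\ge\tau_u$ for every $u$ (hence every summand equals $\tau_u$) and dismisses the converse as straightforward, which is exactly the termwise-saturation argument you spell out explicitly. Your write-up is just a more complete version of the paper's proof, filling in the converse via the observation that equality in the sum forces equality in each summand.
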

\begin{proof}
If $S$ is a feasible solution to SM-CA, that is, $\rho(S)=n=\eta$, then every node $u$ satisfies $P_u(S)\ge \tau_u$, and thus
	$f(S)= \sum_{u\in V}\tau_u$.
The only-if part is also straightforward.
\end{proof}


The above lemma implies that minimizing seed set size for the constraint of $\rho(S) = n$ is the same as minimizing the seed set size
	for the constraint of $f(S)=\sum_{u\in V}\tau_u$.
The reason we want to switch the minimization problem on the surrogate function $f(S)$ is because it is submodular, as pointed out by
	Lemma~\ref{thm:f(s)submodular}.
We remark that in \cite{NIPS2014}, Farajtabar et al. study an objective function in the similar form as $f(S)$ in the continuous-time
	influence model, but the interpretation
	of $\tau$ is the cap on user activity intensity in \cite{NIPS2014} rather than the activation threshold.

\begin{lemma}\label{thm:f(s)submodular}
The surrogate function $f(S)$ is monotone and submodular.
\end{lemma}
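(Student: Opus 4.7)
The plan is to reduce the claim to two well-known facts: (i) for each node $u$, the activation probability $P_u(S)$ is a monotone submodular function of $S$ under the IC model, and (ii) truncating a monotone submodular function by a constant cap preserves both monotonicity and submodularity. Since $f$ is a non-negative linear combination (a sum) over $u \in V$ of functions of the form $\min\{P_u(S), \tau_u\}$, monotonicity and submodularity carry over to $f$.

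For fact (i), I would appeal to the standard live-edge / possible-world argument from Kempe--Kleinberg--Tardos: couple the cascade with a random subgraph $G_L$ in which each edge $(x,y)$ is independently retained with probability $p_{xy}$, and observe that $P_u(S) = \Pr_L[\,u \text{ is reachable from } S \text{ in } G_L\,]$. For each fixed realization $L$, the set of seeds from which $u$ is reachable is upward closed, so the indicator $\I[u \text{ reachable from } S \text{ in } L]$ is a monotone coverage function of $S$, which is submodular. Taking expectation over $L$ preserves monotonicity and submodularity, giving fact (i).

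For fact (ii), let $g$ be any monotone submodular function and let $h(S) := \min\{g(S), c\}$ for a constant $c \ge 0$. Monotonicity of $h$ is immediate. For submodularity, fix $S \subseteq T$ and $w \notin T$, and compare the marginal gains $\Delta_S := h(S\cup\{w\})-h(S)$ and $\Delta_T := h(T\cup\{w\})-h(T)$. I would split into three cases based on where the cap $c$ sits relative to $g(T)$ and $g(T\cup\{w\})$: if $g(T\cup\{w\}) \le c$, both $\Delta_S, \Delta_T$ coincide with marginals of $g$ and submodularity follows from that of $g$; if $g(T) \ge c$, then $\Delta_T = 0 \le \Delta_S$ since $h$ is monotone; in the remaining case $g(T) < c \le g(T\cup\{w\})$, one checks $\Delta_T = c - g(T)$ and uses $g(S) \le g(T)$ together with submodularity of $g$ to bound $\Delta_S \ge \min\{g(S\cup\{w\})-g(S),\,c - g(S)\} \ge c - g(T) = \Delta_T$. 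Applying (ii) with $g = P_u(\cdot)$ and $c = \tau_u$ and then summing over $u \in V$ yields that $f$ is monotone and submodular.

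I do not anticipate a serious obstacle; the only subtle point is the case analysis in (ii), where one must be careful about which of $g(S\cup\{w\})$, $g(T)$, and the cap $c$ is largest, since $\min$ is non-smooth at the cap and cannot be handled by a simple pointwise inequality. The argument is otherwise short, and the core structural insight --- that $f$ is obtained by summing capped versions of the standard IC activation probability --- is exactly what lets us circumvent the non-submodularity of $\rho$ exhibited in Example~\ref{eg:rho(S)not submodular}.
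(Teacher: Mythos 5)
Your proposal is correct and follows exactly the same route as the paper's proof sketch: submodularity of $P_u(S)$ via the live-edge argument of Kempe et al., preservation of monotonicity and submodularity under the constant cap $\min\{\cdot,\tau_u\}$, and closure under summation. The only difference is that you fill in the case analysis for the truncation step, which the paper dismisses as ``easy to check,'' and your three cases (split on where $c$ falls relative to $g(T)$ and $g(T\cup\{w\})$) are complete and correct.
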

\begin{proof}[sketch]
It is obvious that $f(S)$ is monotone.
For submodularity, following \cite{kempe2003maximizing} we know that $P_u(S)$ is submodular.
Then it is easy to check that the minium of a submodular function and a constant is still submodular, and the simple summation of
	submodular functions is also submodular.
\end{proof}
Having the submodularity, we can design a greedy algorithm guided by $f(S)$.
 But like most work in the IC model, we cannot avoid the problem of computing $f(S)$.
 It has been shown that exactly computing $\sigma(S)$ in the IC model is \#P-hard \cite{chen2010sharpphard},
 where $\sigma(S)=\sum_{u\in V}P_u(S)$ is the expected number of active nodes given the seed set $S$.
 Thus, computing $f(S)$ is also \#P-hard since $\sigma(S)=\sum_{u\in V}P_u(S)=f(S)$ if we set $\tau_u=1$ for all $u\in V$.
In this section, we use Monte Carlo simulation to estimate $f(S)$. A more efficient method will be discussed in Section \ref{sec:improved gerrdy algorithms based on RRS}.
\begin{algorithm}[t]
\renewcommand{\algorithmicrequire}{\textbf{Input:}}
\renewcommand\algorithmicensure {\textbf{Output:}}
    \caption{Estimate $f(S)$ by Monte Carlo} \label{alg:compf(s)}
    \begin{algorithmic}[1]
    \REQUIRE ~~ $G=(V,~E),~\{p_{u v}\}_{(u,~v)\in E}, \{\tau_u\}_{u\in V},~U,~S,~R$\\
    \ENSURE ~~  $\hat{f}(S)$: the estimation of $f(S)$\\
\STATE  $\hat{f}(S)=0$;
\STATE $\hat{P}_u(S)=0$; $t_u=0$ for all $u \in V$
\FOR {$i=1$ to $R$}
\STATE simulate IC diffusion from seed set $S$
\IF {$u$ is activated}
\STATE  $t_u=t_u+1$
\ENDIF
\ENDFOR
\FOR {$u\in U$}
    \STATE $\hat{P}_u(S)=t_u/R$
    \IF {$\hat{P}_u(S)\geq\tau_u$}
        \STATE  $\hat{f}(S)=\hat{f}(S)+\tau_u$
    \ELSE
        \STATE  $\hat{f}(S)=\hat{f}(S)+\hat{P}_u(S)$
    \ENDIF
\ENDFOR
\RETURN $\hat{f}(S)$
\end{algorithmic}
\end{algorithm}

\begin{algorithm}[t!]
\renewcommand{\algorithmicrequire}{\textbf{Input:}}
\renewcommand\algorithmicensure {\textbf{Output:}}
    \caption{Greedy algorithm for SM-CA with $\eta=n$} \label{alg:greedy-min-eta=n}
    \begin{algorithmic}[1]
    \REQUIRE ~~ $G=(V,~E),~\{p_{uv}\}_{(u,~v)\in E},\{\tau_u\}_{u\in V},~U$, $\varepsilon$\\
    \ENSURE ~~  Seed set $S$\\
\STATE  $S=\emptyset$, $\hat{f}(S)=0$
   \WHILE {$\hat{f}(S)<\sum_{u\in V}\tau_u-\varepsilon$}
        \STATE choose $v=\arg\max_{u\in V\setminus S}~[\hat{f}(S\cup \{u\})-\hat{f}(S)]$
        \STATE $S=S\cup \{v\}$
    \ENDWHILE
\RETURN $S$
\end{algorithmic}
\end{algorithm}
Algorithm \ref{alg:compf(s)} shows the procedure of the Monte Carlo method. Given a seed set $S$ and a node $u$, Algorithm \ref{alg:compf(s)} simulates the diffusion process from $S$ for $R$ runs,
 and uses the frequency that $u$ has been influenced as the estimation of $P_u(S)$.
 Then we can obtain the estimation of $f(S)$ directly by a truncation operation. The estimations of $P_u(S)$ and $f(S)$ are denoted by $\hat{P}_u(S)$ and $\hat{f}(S)$ respectively.

The accuracy of the estimate $\hat{f}(S)$ depends on the number of simulation runs $R$, as rigorously specified by
	the following lemma.
\begin{lemma}\label{lem:error estimate}
For any seed set $S$, suppose $\hat{f}(S)$ is the estimate of $f(S)$ output by Algorithm \ref{alg:compf(s)},
then $\forall~\gamma>0,~\delta>0$, $\Pr(|\hat{f}(S)-f(S)|\leq \gamma)\geq 1-1/n^\delta$ if $R\geq (n^2\ln (2n^{\delta+1}))/ 2\gamma^2$.
\end{lemma}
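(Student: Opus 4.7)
The plan is to reduce the problem to a standard Hoeffding-plus-union-bound argument, using the fact that the truncation operation $x \mapsto \min\{x,\tau_u\}$ is 1-Lipschitz.

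First, I would fix a node $u \in U$ and note that in Algorithm~\ref{alg:compf(s)}, the variable $t_u$ is the sum of $R$ independent Bernoulli random variables, each equal to $1$ with probability exactly $P_u(S)$ (since each simulation is an independent run of the IC diffusion from $S$). Hence $\hat{P}_u(S) = t_u/R$ is the empirical mean of $R$ i.i.d.\ $[0,1]$-valued random variables with true mean $P_u(S)$. By Hoeffding's inequality, for any $\gamma > 0$,
\begin{equation*}
\Pr\!\left(\,|\hat{P}_u(S)-P_u(S)| > \gamma/n\,\right) \;\le\; 2\exp\!\left(-2R\gamma^2/n^2\right).
\end{equation*}

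Second, I would bound $|\hat{f}(S)-f(S)|$ in terms of the per-node deviations. Since $|\min\{a,c\}-\min\{b,c\}| \le |a-b|$ for every real $c$, we have
\begin{equation*}
|\hat{f}(S)-f(S)| \;\le\; \sum_{u\in U}\bigl|\min\{\hat P_u(S),\tau_u\}-\min\{P_u(S),\tau_u\}\bigr| \;\le\; \sum_{u\in U}|\hat P_u(S)-P_u(S)|.
\end{equation*}
Therefore, if $|\hat{P}_u(S)-P_u(S)| \le \gamma/n$ holds for all $u \in U$ simultaneously, then $|\hat{f}(S)-f(S)| \le |U|\cdot\gamma/n \le \gamma$.

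Third, I would apply the union bound over all $u \in U$ (at most $n$ nodes) to obtain
\begin{equation*}
\Pr\!\left(\,|\hat{f}(S)-f(S)| > \gamma\,\right) \;\le\; 2n\exp\!\left(-2R\gamma^2/n^2\right),
\end{equation*}
and then solve for $R$ so that the right-hand side is at most $n^{-\delta}$: this gives $2R\gamma^2/n^2 \ge \ln(2n^{\delta+1})$, i.e., $R \ge n^2\ln(2n^{\delta+1})/(2\gamma^2)$, exactly the bound claimed.

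I do not expect a real obstacle here, because the proof is a standard concentration argument; the only mildly subtle point worth stating explicitly is the 1-Lipschitz property of the truncation $x \mapsto \min\{x,\tau_u\}$, which is what lets an $(\gamma/n)$-accurate estimate of every marginal $P_u(S)$ translate directly into a $\gamma$-accurate estimate of $f(S)$ without any dependence on the thresholds $\tau_u$.
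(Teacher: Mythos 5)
Your proposal is correct and takes essentially the same route as the paper's own proof: a per-node Hoeffding bound at accuracy $\gamma/n$, the $1$-Lipschitz property of the truncation $x\mapsto\min\{x,\tau_u\}$ to convert per-node deviations into a bound on $|\hat f(S)-f(S)|$, and a union bound over the (at most $n$) nodes, with the same arithmetic yielding $R \ge n^2\ln(2n^{\delta+1})/(2\gamma^2)$. The only cosmetic difference is that the paper states the final step as a chain of inequalities on the complement event rather than solving for $R$ at the end.
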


\begin{proof}
For each node $u$, Let $X_u=\sum_{i=1}^R X_{u}^{(i)}$, where $X_{u}^{(i)}$ is a random variable
defined as $X_{u}^{(i)}=1$ if $u$ is influenced in the $i$-th simulation and $X_{u}^{(i)}=0$ otherwise.
Then $X_u$ is the number of times that $u$ is active after $R$ simulations.
 Thus, $X_u=R\cdot\hat{P}_u(S)$ and $E[X_u]=\sum_{i=1}^R E[X_{u}^{(i)}]=R\cdot P_u(S)$.
By Hoeffding's inequality and the condition $R\geq (n^2\ln (2n^{\delta+1}))/ 2\gamma^2$, for any constant $\gamma>0$ and $\delta>0$,
\begin{align*}
&\Pr(|\hat{P}_u(S)-P_u(S)|\geq\gamma/n)=\Pr(|X_u-E[X_u]|\geq R\gamma/n) \\
&\leq 2\exp\left(-\frac{2(\frac{R\gamma}{n})^2}{R} \right)\leq\frac{1}{n^{\delta+1}}.
\end{align*}

We next show that $|\hat{f}(S)-f(S)|\leq \sum_{u\in V}|\hat{P}_u(S)-P_u(S)|$ always holds.
\begin{equation*}
\begin{aligned}
&|\hat{f}(S)-f(S)|
=\left|\sum_{u\in V}\min\{\hat{P}_u(S),~\tau_u\}-\sum_{u\in V}^{}\min\{P_u(S),~\tau_u\}\right|\\
&\le  \sum_{u\in V} \left|\min\{\hat{P}_u(S),~\tau_u\}- \min\{P_u(S),~\tau_u\}\right|\\
&\leq \sum_{u\in V}|\hat{P}_u(S)-P_u(S)|.
\end{aligned}
\end{equation*}
Then, we have
\begin{align*}
&\Pr(|\hat{f}(S)-f(S)|\leq\gamma) \ge \Pr(\sum_{u\in V}|\hat{P}_u(S)-P_u(S)| \le \gamma) \\
&\ge \Pr(\forall u\in V, |\hat{P}_u(S)-P_u(S)| \le \gamma/n) \\
& = 1 - \Pr(\exists u\in V, |\hat{P}_u(S)-P_u(S)| \ge \gamma/n) \\
& \ge 1 - \sum_{u\in V} \Pr(|\hat{P}_u(S)-P_u(S)| \ge \gamma/n) \ge 1-\frac{1}{n^\delta}.
\end{align*}
\end{proof}
Having the estimation algorithm of $f(S)$, we show our greedy algorithm for SM-CA problem with $\eta=n$ in Algorithm \ref{alg:greedy-min-eta=n}.

Algorithm \ref{alg:greedy-min-eta=n} starts from an empty seed set $S$. At each iteration, it adds one node $v$
providing the largest marginal increment to $\hat{f}(S)$ into $S$, i.e.,
$v=\underset{u\in V}{\arg\max}~[\hat{f}(S\cup \{u\})-\hat{f}(S)]$.
The algorithm ends when $\hat{f}(S)\geq \sum_{u\in V}\tau_u-\varepsilon$ and outputs $S$
as the selected seed set. Goyal et al. proved the performance guarantee for the greedy algorithm
when $f(S)$ is monotone and submodular \cite{goyal2012minimizing}.

\begin{theorem}[\cite{goyal2012minimizing}]\label{thm:goyal}
Let $G=(V,~E)$ be a social graph and $f(\cdot)$ be a nonnegative, monotone and submodular function defined on
$2^V$.
Given a threshold $0<\eta\leq f(V)$, let $S^*\subseteq V$ be a subset with minimum size such that
$f(S^*)\geq\eta$, and $S$ be the greedy solution using a $(1-\gamma)$-approximate function
$\hat{f}(\cdot)$ with the stopping criteria $\hat{f}(S)\geq\eta-\varepsilon$.
 Then, there exists a $\gamma$ such that for any $\varphi>0$ and $\varepsilon>0$,
 $|S|\leq|S^*|(1+\varphi)(1+\ln(\eta/\varepsilon))$ with high probability.
\end{theorem}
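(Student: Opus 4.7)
The plan is to adapt the classical greedy analysis for submodular set cover (in the Wolsey style) to the bicriteria setting where we only have access to a $(1-\gamma)$-approximate oracle $\hat f$ and a relaxed stopping criterion $\hat f(S)\ge \eta-\varepsilon$. Let $S_i$ denote the greedy set after $i$ iterations and let $\Delta_i = \eta - f(S_i)$ be the true residual coverage. The heart of the argument is to establish a geometric decay of $\Delta_i$, then invert it to bound the number of iterations.

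First, by monotonicity and submodularity of the true function $f$, we have $f(S^*\cup S_i)\ge f(S^*)\ge \eta$, so the aggregate marginal gain of adding the elements of $S^*$ to $S_i$ is at least $\Delta_i$. By submodularity and averaging, some element $v^\star\in S^*\setminus S_i$ has true marginal gain $f(S_i\cup\{v^\star\})-f(S_i)\ge \Delta_i/|S^*|$. The greedy instead selects the element maximizing $\hat f(S_i\cup\{u\})-\hat f(S_i)$; using the $(1-\gamma)$-approximation of $\hat f$ at each query, the $f$-marginal of the chosen element is at least $(1-\gamma)\Delta_i/|S^*|$ up to lower order error. Iterating gives
\[
\Delta_t \le \Delta_0\left(1-\frac{1-\gamma}{|S^*|}\right)^t\le \eta\,\exp\!\left(-\frac{t(1-\gamma)}{|S^*|}\right).
\]
Setting $t=\lceil |S^*|\ln(\eta/\varepsilon)/(1-\gamma)\rceil$ gives $\Delta_t\le\varepsilon$, at which point the stopping criterion is met. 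Choosing $\gamma$ small enough that $1/(1-\gamma)\le 1+\varphi$ yields the stated bound $|S|\le |S^*|(1+\varphi)(1+\ln(\eta/\varepsilon))$, where the extra constant absorbs the ceiling.

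The probabilistic clause is then obtained by invoking Lemma \ref{lem:error estimate}: we choose the sample budget $R$ so that, with probability at least $1-1/n^{\delta+1}$, a single evaluation of $\hat f$ is within the desired multiplicative tolerance of $f$, and then take a union bound over the at most $n$ iterations of the greedy loop and the polynomially many candidate evaluations per iteration. This ensures that the $(1-\gamma)$-approximation holds at every greedy query simultaneously.

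The main obstacle will be cleanly managing the three interacting sources of error: the multiplicative $(1-\gamma)$ slack in $\hat f$, the additive slack $\varepsilon$ in the stopping rule, and the mismatch between measuring progress with $\hat f$ (stopping) and analyzing it with $f$ (covering). In particular, one must rule out premature termination --- that $\hat f(S)$ crosses $\eta-\varepsilon$ while $f(S)$ is still far below $\eta$ --- which again follows from the uniform $(1-\gamma)$-approximation event. Once this high-probability event is conditioned on, the deterministic recursion above goes through unchanged.
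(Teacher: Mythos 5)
First, a framing note: the paper does not prove this statement at all --- it is Theorem~\ref{thm:goyal}, imported verbatim from Goyal et al.~\cite{goyal2012minimizing} and used as a black box to obtain Theorem~\ref{thm:approratio}. So your proposal can only be judged on its own merits, and on those merits it has a genuine gap, even though the Wolsey-style geometric-decay skeleton is the right one (and is essentially what \cite{goyal2012minimizing} does).

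The gap is the step converting value approximation into marginal approximation. You claim that because $\hat f$ is a $(1-\gamma)$-approximation of $f$, the element maximizing the $\hat f$-marginal has true $f$-marginal at least $(1-\gamma)\Delta_i/|S^*|$ ``up to lower order error.'' A multiplicative guarantee on \emph{values} does not give a multiplicative guarantee on \emph{marginals}: writing $(1-\gamma)f\le\hat f\le f$, the discrepancy between $\hat f(S\cup\{u\})-\hat f(S)$ and $f(S\cup\{u\})-f(S)$ is additive of order $\gamma\,f(S\cup\{u\})$, which can be as large as $\gamma\eta$. Near the end of the run the best true marginal is only about $\Delta_i/|S^*|\approx\varepsilon/|S^*|$, so this error term is \emph{not} lower order unless $\gamma$ is taken of order $\varepsilon/(\eta\,|S^*|)$. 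This is precisely why the theorem is phrased ``there exists a $\gamma$'': $\gamma$ must shrink with $\varepsilon/\eta$ and the instance size, whereas your proposal fixes $\gamma$ only from $\varphi$ via $1/(1-\gamma)\le 1+\varphi$. The correct recursion is $\Delta_{i+1}\le\Delta_i(1-1/|S^*|)+O(\gamma\eta)$, which must be unrolled with the accumulated additive error shown to stay below a constant fraction of $\varepsilon$; your clean geometric recursion does not follow as written.

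A second, related gap concerns termination. From $\Delta_t\le\varepsilon$ you conclude ``the stopping criterion is met,'' but the algorithm stops on $\hat f(S)\ge\eta-\varepsilon$, and $\hat f$ may \emph{under}estimate $f$, so the loop can continue past iteration $t$ --- and late termination is exactly what threatens the cardinality bound $|S|\le|S^*|(1+\varphi)(1+\ln(\eta/\varepsilon))$, which is the only conclusion the theorem asserts. Your final paragraph instead guards against \emph{premature} termination, which is the harmless direction here: stopping early only makes $|S|$ smaller (it endangers coverage, which this theorem does not claim). To close the argument you must drive the true residual low enough that even the underestimate crosses the stopping threshold, roughly $f(S)\ge(\eta-\varepsilon)/(1-\gamma)$, which again forces $\gamma\lesssim\varepsilon/\eta$, and then absorb the extra iterations into the $(1+\varphi)$ factor. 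With these two repairs --- $\gamma$ chosen as a function of $\varepsilon$, $\eta$, $n$, and the termination analysis carried out on $\hat f$ rather than $f$ --- your skeleton becomes a valid reconstruction of the cited result.
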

Now we can conclude the approximation ratio of Algorithm \ref{alg:greedy-min-eta=n} based on
Lemmas~\ref{lem:feasible:eta=n}--\ref{lem:error estimate} and Theorem \ref{thm:goyal}.

\begin{theorem}\label{thm:approratio}
When $\eta=n$, for any $\phi>0,~\varepsilon>0$,
 Algorithm \ref{alg:greedy-min-eta=n} ends when $\hat{f}(S)\ge \sum_{u\in V}\tau_u-\varepsilon$ and approximates SM-CA problem within
a factor of $(1+\phi)\cdot(1+\ln \frac{\sum_{u\in V}\tau_u}{\varepsilon})$ with high probability.
\end{theorem}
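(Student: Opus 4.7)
The plan is to combine the three earlier lemmas with the generic greedy guarantee of Theorem~\ref{thm:goyal}. First I would verify that the surrogate $f(\cdot)$ satisfies the hypotheses of Theorem~\ref{thm:goyal}: it is nonnegative, monotone, and submodular by Lemma~\ref{thm:f(s)submodular}, and its maximum value is exactly $f(V) = \sum_{u\in V}\tau_u$ since $P_u(V) = 1 \ge \tau_u$ for every $u$. Then, by Lemma~\ref{lem:feasible:eta=n}, the SM-CA instance with $\eta = n$ is equivalent to the problem of minimizing $|S|$ subject to $f(S) \ge \sum_{u\in V}\tau_u$, so the optimum sizes of the two problems coincide. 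This lets me match the abstract threshold parameter in Theorem~\ref{thm:goyal} with the value $\sum_{u\in V}\tau_u$.

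Next I would handle the estimation error. Theorem~\ref{thm:goyal} requires a multiplicative $(1-\gamma)$-approximation $\hat f$ of $f$, whereas Lemma~\ref{lem:error estimate} only provides an additive guarantee $|\hat f(S) - f(S)| \le \gamma'$ with high probability. The bridge is to pick the additive error $\gamma'$ small relative to the minimum nonzero value of $f$ encountered during the greedy run, so that the additive bound translates into the required multiplicative one. Concretely, I would apply Lemma~\ref{lem:error estimate} with a suitably small $\gamma'$ (so that $\hat f(S) \ge (1-\gamma) f(S)$ on every query made by the algorithm) and set $R$ large enough so that, by a union bound over the at most polynomially many queries of the greedy procedure, all estimates are simultaneously accurate with high probability. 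This yields the desired $(1-\gamma)$-approximate oracle.

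Once the oracle is in place, the stopping rule $\hat f(S) \ge \sum_{u\in V}\tau_u - \varepsilon$ exactly matches the stopping rule in Theorem~\ref{thm:goyal} with threshold $\sum_{u\in V}\tau_u$ and tolerance $\varepsilon$. The theorem then gives
\[
|S| \le |S^*|\,(1+\varphi)\left(1 + \ln\frac{\sum_{u\in V}\tau_u}{\varepsilon}\right)
\]
with high probability, where $S^*$ is an optimal solution to the equivalent surrogate problem, which by Lemma~\ref{lem:feasible:eta=n} equals the optimum of SM-CA with $\eta=n$. Renaming $\varphi$ as $\phi$ gives the stated approximation factor. Termination is immediate: $f(V) = \sum_{u\in V}\tau_u$, and on every iteration the greedy rule picks the element of largest marginal gain, so as long as $\hat f(S) < \sum_{u\in V}\tau_u - \varepsilon$ there is some node whose inclusion strictly increases $\hat f$, which together with the high-probability accuracy of $\hat f$ ensures the loop exits in at most $n$ iterations.

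The main obstacle is the mismatch between the additive error control in Lemma~\ref{lem:error estimate} and the multiplicative approximation hypothesis of Theorem~\ref{thm:goyal}; I expect the bulk of the technical care to lie in choosing $\gamma'$ (hence $R$) so that the additive guarantee upgrades to a $(1-\gamma)$-multiplicative guarantee uniformly over all $O(n^2)$ oracle calls of the greedy procedure, while keeping $R$ polynomial in $n$, $1/\varepsilon$, and $1/\phi$.
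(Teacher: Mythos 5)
Your proposal takes essentially the same route as the paper: the paper obtains Theorem~\ref{thm:approratio} precisely by combining Lemma~\ref{lem:feasible:eta=n} (feasibility of SM-CA with $\eta=n$ is equivalent to $f(S)=\sum_{u\in V}\tau_u$), Lemma~\ref{thm:f(s)submodular} (monotonicity and submodularity of the surrogate), Lemma~\ref{lem:error estimate} (accuracy of the Monte Carlo estimate $\hat f$), and the bicriteria guarantee of Theorem~\ref{thm:goyal}, which is exactly your plan. If anything, your explicit bridging of the additive error bound of Lemma~\ref{lem:error estimate} to the multiplicative $(1-\gamma)$-approximation hypothesis of Theorem~\ref{thm:goyal} (using that $f(S)\ge\min_u\tau_u$ on nonempty sets, plus a union bound over the polynomially many oracle calls) is more careful than the paper, which states the combination without spelling out these details.
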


\subsubsection{SM-CA problem with $\eta<n$}\label{sec:eta<n}

When $\eta<n$, the surrogate function $f(S)$ does not enjoy the property in Lemma~\ref{lem:feasible:eta=n} any more, and thus
	the problem becomes more difficult.
We use the following example to explain this phenomenon.

\begin{example}(Figure \ref{fig:etaln}). Suppose the influence probability on each edge from $S_1$ is 0.5 and each edge from $S_2$ is 1. The activation threshold of each node is 1, and $\eta=3$.
	Then $f(S_1)=5$, $f(S_2)=4$, but $\rho(S_1)=2$, $\rho(S_2)=4$. Thus, $S_1$ is not a feasible solution even though $f(S_1)$ is large enough. This simple example shows that too many ``small active probability'' nodes may mislead $f(S)$ causing it
	to diverge significantly from $\rho(S)$.
\end{example}


Now we show the hardness result of SM-CA problem with $\eta<n$. Our analysis is based on the hardness of
the \emph{densest $k$-subgraph} (D$k$S) problem \cite{feige2001denseksubgraph}.
An instance of D$k$S problem consists of an undirected graph $G=(V, E)$ and a parameter $k<n$, where $n=|V|$.
The objective is to find a subset $V'\subseteq V$ of cardinality $k$ such that
 the number of edges with both endpoints in $V'$ is maximized.

The first polynomial time approximation algorithm for D$k$S problem is given by
Feige et al. in 2001 \cite{feige2001denseksubgraph} with the approximation ratio $O(n^{1/3})$.
This result was improved to $O(n^{1/4+\varepsilon})$ (for any $\varepsilon>0$) by Bhaskara et al. \cite{bhaskara2010dks1/4} in 2010 and this is the currently the best known guarantee. For the hardness of D$k$S problem, Khot \cite{Khot2006ptas} proved that the D$k$S problem does not admit PTAS
under the assumption that NP problems does not have sub-exponential time randomized algorithms.
 The exact complexity of approximating D$k$S problem is still open, but it is widely believed that
D$k$S problem can only be approximated within polynomial ratio.

Partially borrowing the idea in \cite{Hajiaghayi2006polyhardness},
we can prove a hardness result for SM-CA problem with $\eta<n$ based on the hardness of D$k$S problem.
\OnlyInShort{
The whole proof of the following theorem is reported in our full version \cite{fullversion}.
}
\begin{theorem}\label{thm:hardness for eta<n}
When $\eta<n$, SM-CA problem cannot be approximated within $\frac{1}{\sqrt{6}}n^{\delta/2}$
if D$k$S problem cannot be approximated within $n^\delta$, for any $\delta>0$.
\end{theorem}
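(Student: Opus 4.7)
The plan is a gap-preserving, polynomial-time reduction from the densest-$k$-subgraph (D$k$S) problem to SM-CA with $\eta<n$. Given a D$k$S instance $(G=(V,E),k)$, I would build the following bipartite SM-CA instance: introduce one ``vertex-node'' $v$ for every $v\in V$ and one ``edge-node'' $x_e$ for every edge $e=(u,v)\in E$; add directed edges $(u,x_e)$ and $(v,x_e)$ with influence probability $p=\tfrac{1}{2}$; take the target set to be $U=\{x_e:e\in E\}$ with threshold $\tau_{x_e}=\tfrac{3}{4}$. A short calculation gives $P_{x_e}(S)\in\{0,\tfrac{1}{2},\tfrac{3}{4}\}$ depending on whether $0$, $1$ or $2$ endpoints of $e$ lie in $S$, so $x_e$ is cumulatively active if and only if both endpoints of $e$ are seeds. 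Since the edge-nodes have no outgoing edges, they are never helpful as seeds, so we may restrict attention to $S\subseteq V$, in which case the SM-CA optimum equals $\min\{|S|:S\subseteq V,\ |E(G[S])|\ge \eta\}$ --- the minimum number of vertices whose induced subgraph contains at least $\eta$ edges.

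Second, I would push the D$k$S hardness gap through this identification. For a suitable target value $\eta^{*}$, D$k$S-hardness at ratio $n^{\delta}$ yields the promise: (YES) some $k$-vertex subset of $V$ induces at least $\eta^{*}$ edges, versus (NO) every $k$-vertex subset induces at most $\eta^{*}/n^{\delta}$ edges. Setting $\eta=\eta^{*}$, the YES case gives an SM-CA optimum of at most $k$. For the NO case, let $J\subseteq V$ have size $j$; summing edge counts across the $\binom{j}{k}$ $k$-subsets of $J$ counts every edge of $G[J]$ exactly $\binom{j-2}{k-2}$ times, so
\begin{equation*}
|E(G[J])|\ \le\ \frac{\binom{j}{k}}{\binom{j-2}{k-2}}\cdot \frac{\eta^{*}}{n^{\delta}}\ =\ \frac{j(j-1)}{k(k-1)}\cdot \frac{\eta^{*}}{n^{\delta}},
\end{equation*}
and $|E(G[J])|\ge \eta^{*}$ forces $j(j-1)\ge k(k-1)\,n^{\delta}$, i.e., $j\ge \sqrt{k(k-1)}\,n^{\delta/2}$. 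Dividing by the YES bound $k$ gives a multiplicative separation of at least $\sqrt{(k-1)/k}\,n^{\delta/2}$ between the two cases, from which the claimed $\tfrac{1}{\sqrt{6}}n^{\delta/2}$ factor will follow after the constants are tracked precisely.

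The main obstacle I anticipate is reconciling the two meanings of ``$n$'' in the statement: in the D$k$S-hardness assumption $n$ counts vertices of the D$k$S graph, while in the SM-CA conclusion $n$ counts nodes of the reduced graph, which is $|V|+|E|=O(|V|^{2})$ in the worst case. To preserve the $n^{\delta/2}$ exponent one must either invoke D$k$S-hardness on sparse instances where $|E|=O(|V|)$, or pad $V$ with isolated vertices so that $|V|+|E|=\Theta(|V|)$; the constant $\tfrac{1}{\sqrt{6}}$ is then the residue of this bookkeeping together with the factor $\sqrt{(k-1)/k}$ above. A routine side check is that the constructed instance satisfies $\eta<n$, which is immediate because $\eta=\eta^{*}\le |E|<|V|+|E|=n$, so the hypothesis ``$\eta<n$'' is in fact active.
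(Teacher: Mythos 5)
Your gadget is essentially the paper's own: the same bipartite construction (vertex-nodes and edge-nodes, influence probability $\tfrac{1}{2}$, threshold $\tfrac{3}{4}$ on edge-nodes), with the same key property that an edge-node is cumulatively activated iff both endpoints are seeded; the paper merely also puts the vertex-nodes in the target set with threshold $1$, which shifts the edge count by $k$ but changes nothing essential. The genuine gap is in the second half, in how you transfer hardness. The theorem's hypothesis is inapproximability of D$k$S as an optimization problem: no polynomial-time algorithm \emph{finds} a $k$-subset whose induced edge count is within factor $n^{\delta}$ of the densest. To use this hypothesis, a reduction must turn the assumed $r$-approximation algorithm $\mathcal{A}$ for SM-CA into an algorithm that actually outputs a dense $k$-subgraph. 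Your argument instead turns $\mathcal{A}$ into a distinguisher for the promise problem ``some $k$-subset induces $\ge\eta^{*}$ edges'' versus ``every $k$-subset induces $\le\eta^{*}/n^{\delta}$ edges,'' and then appeals to hardness of that gap problem, asserting that ``D$k$S-hardness at ratio $n^{\delta}$ yields the promise.'' That implication runs the wrong way: an approximation algorithm yields a gap distinguisher, so gap-hardness implies approximation-hardness, but the converse is a search-to-decision reduction that is not known for D$k$S and that you do not supply. Your distinguisher never produces a dense $k$-subgraph (in the YES world it hands you a feasible set of size up to $rk$, not $k$), so it cannot contradict the stated hypothesis; as written, you have proved the theorem only under a silently strengthened assumption. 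Your within-proof counting steps are sound; it is the logical frame around them that fails.

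The missing step is exactly the heart of the paper's proof. The paper uses $\mathcal{A}$ itself both to locate the threshold and to produce the D$k$S solution: define $\eta(k)$ as the largest coverage requirement for which $\mathcal{A}$ returns $k$ seeds (computable by linear search with $\mathcal{A}$), and output $\mathcal{A}$'s seed set $S$ as the D$k$S answer; it induces at least $\eta-k$ edges. Then a double-counting argument certifies its quality: no $\lfloor k/r\rfloor$-set of seeds can reach coverage $\eta+1$ (otherwise, by $r$-approximability, $\mathcal{A}$ would have returned at most $k$ seeds at threshold $\eta+1$, contradicting maximality of $\eta(k)$), hence every $\lfloor k/r\rfloor$-subset of vertices induces at most $\eta-\lfloor k/r\rfloor$ edges, and averaging over the $\binom{k}{\lfloor k/r\rfloor}$ such subsets of an arbitrary $k$-set bounds the D$k$S optimum by
\begin{equation*}
\frac{\bigl(\eta-\lfloor k/r\rfloor\bigr)\binom{k}{\lfloor k/r\rfloor}}{\binom{k-2}{\lfloor k/r\rfloor-2}}
\;\le\; r^{2}\bigl(\eta-\lfloor k/r\rfloor\bigr)\frac{k-1}{k-r}
\;<\;2(\eta-2)r^{2}\;\le\;6(\eta-k)r^{2},
\end{equation*}
so $S$ is a $6r^{2}$-approximation and $6r^{2}\ge n^{\delta}$ forces $r\ge\tfrac{1}{\sqrt{6}}n^{\delta/2}$. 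Note that this is the same double-counting identity you invoke (each edge counted $\binom{j-2}{k-2}$ times), but deployed where it pays: to upper-bound the optimum against the set $\mathcal{A}$ already returned, rather than to lower-bound feasible sizes in a ``NO'' world the hypothesis never gives you. Your bookkeeping worry about the two meanings of $n$ is secondary: the paper measures both ratios in terms of the D$k$S vertex count $n$, so no padding or sparsity assumption is needed.
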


\OnlyInFull{
\begin{proof}
Suppose there is a polynomial time approximation algorithm $\mathcal A$ with performance ratio $r$ for SM-CA with $\eta<n$, we design an algorithm for D$k$S problem based on $\mathcal A$, which has approximation ratio $6r^2$, hence the theorem follows.

Given any instance of D$k$S problem on graph $G=(V,~E)$, construct an instance
(denoted by SM-CA-I) of SM-CA problem as follows. It is defined on a one-way bipartite graph $G'=(V'=V_1\cup V_2,~E')$, where $V_1=V,~V_2=E$,
 the directed edge set $E'=\{(v,~e):\forall~v\in V_1,~e\in V_2$,
and $v$ is one of the endpoints of $e$ in $E\}$. The probability on each edge $e'=(v,~e)$ is $p_{ve}=1/2$.
 The target set $U=V_1\cup V_2$, for each node $e\in V_2$, $\tau_e=3/4$ and for each node $v\in V_1$, $\tau_v=1$.
 For any $k$, let $\eta=\eta(k)$ be the maximum threshold
requirement for which $\mathcal A$ outputs a solution for SM-CA with $k$ nodes.
 That is to say, $\mathcal A$ outputs a seed set with $k$ nodes
if the threshold is $\eta(k)$ and at least $k+1$ nodes if the threshold is larger than $\eta(k)$. \footnote{For any $k$, $\eta(k)$ can be computed efficiently by using algorithm $\mathcal{A}$ and linear search.}

It is clearly that, in SM-CA-I, nodes in $V_2$ are no better than nodes in $V_1$ as candidates of seed since
the target set is the set of all nodes, select a node in $V_2$ can only activate itself, but a node in $V_1$ may help to activate nodes in $V_2$. So here we assume that all seeds selected by algorithm $\mathcal{A}$ are from $V_1$.
Since for each edge $(v,e)\in E'$, $p_{ve}=1/2$ and for each node $e\in V_2$, $\tau_e=3/4$, an easy probability calculation implies that a node $e\in V_2$ can be cumulatively activated if and only if both endpoints of $e$ are selected as seeds.

Suppose the seed set of SM-CA-I with parameter $\eta=\eta(k)$ computed by algorithm $\mathcal A$ is $S'$,
 then we can use the corresponding node set $S$ in graph $G$
as an approximate solution of the D$k$S problem. Indeed, we have $|S|= k$. Since in SM-CA-I $S'$ cumulatively activates at least $\eta$ nodes, only $k$ of them are in $V_1$, so at least $\eta-k$ nodes are cumulatively activated in $V_2$. Therefore, in graph $G$ the number of edges induced by $S$ is at least $\eta-k$.

Without loss of generality, we can assume $\eta\geq k+\lfloor k/2\rfloor$, this is because we can
easily choose $k$ nodes from $V_1$ to cumulatively active $\lfloor k/2\rfloor$ nodes in $V_2$.
It is easy to check that $\eta-k\geq\frac{1}{3}(\eta-2)$.

Suppose the optimal solution of D$k$S problem contains $opt$ edges,
 then it is sufficient to show $opt\leq 2(\eta-2) r^2$. Indeed, if we can prove $opt\leq 2(\eta-2) r^2$, then we have $opt\leq 6(\eta-k) r^2$,
which means there is a $6r^2$-approximate algorithm for the D$k$S problem.

In SM-CA-I, based on the choice of $\eta$ and the fact that $\mathcal A$ is a $r$-approximate algorithm,
 any seed set with size $\lfloor k/r \rfloor$ can cumulatively activate
at most $\eta$ nodes. Thus, at most $\eta-\lfloor k/r\rfloor$ nodes in $V_2$ can be cumulatively activated
by any $\lfloor k/r \rfloor$ seeds in $V_1$.
 This is equivalent to the fact that
there are at most $\eta-\lfloor k/r \rfloor$ edges induced by any $\lfloor k/r \rfloor$ vertexes in $G$.
 Thus, for any $T\subseteq V$
with $|T|=k$, all possible subset of $\binom{k}{\lfloor k/r\rfloor}$ vertexes in $T$ can induce at most
$(\eta-\lfloor k/r\rfloor) \binom{k}{\lfloor k/r\rfloor}$ edges and each edge is counted
exactly $\binom{k-2}{\lfloor k/r\rfloor-2}$ times.
 So, if $k>2r$, the total number of edges induced by $T$ is at most
\begin{center}
$\frac{(\eta-\lfloor k/r\rfloor) \binom{k}{\lfloor k/r \rfloor}}{\binom{k-2}{\lfloor k/r \rfloor-2}}\leq
r^2(\eta-\lfloor k/r\rfloor) \frac{k-1}{k-r}<2(\eta-2)r^2$.
\end{center}
if $k\leq 2r$, then $opt\leq \binom{k}{2}\leq \frac{k^2}{2}\leq 2r^2$. By the arbitrary chosen of $T$, we have $opt\leq 2(\eta-2) r^2$ and this completes the proof.
\end{proof}

We remark that when $\eta=n$, it corresponds to the case of $k=n$ in the DkS problem, which has a trivial solution and makes
	the theorem statement vacuously true.
Thus we add $\eta < n$ just to emphasize that the theorem is only useful when $\eta < n$.

}

\subsection{Influence maximization with cumulative activation (IM-CA) problem}\label{sec:im-ca}

In IM-CA problem, we prove a strong inapproximability result even when the base graph is a bipartite graph.

\begin{theorem}\label{thm:maxhardness}
For any $\varepsilon>0$, it is NP-hard to approximate IM-CA problem within a factor of $N^{1-\varepsilon}$, where $N$ is the input size.
\end{theorem}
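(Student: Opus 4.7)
The plan is to reduce from Max $k$-Coverage, where by Feige's result it is NP-hard to distinguish instances $(\mathcal S, U, k)$ in which $k$ sets cover all $m=|U|$ elements (YES case) from instances in which no $k$ sets cover more than $(1-1/e+\varepsilon_0)m$ of them (NO case), for any fixed $\varepsilon_0>0$. Given such an instance with $|\mathcal S|=M$, I construct an IM-CA instance with budget $k$ on a three-layer directed graph --- set nodes $\{a_j\}_{j\le M}$, element nodes $\{b_i\}_{i\le m}$, and $N$ reward nodes $\{c_l\}_{l\le N}$ --- whose underlying undirected graph is bipartite with $\{a_j\}\cup\{c_l\}$ on one side and $\{b_i\}$ on the other. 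Edges are $a_j\to b_i$ with probability $1$ whenever $u_i\in S_j$, and $b_i\to c_l$ with probability $p=1/2$ for every pair $(i,l)$. I set $\tau_{b_i}=1$, $\tau_{c_l}=1-2^{-m}$, target set $U_{\text{target}}=V$, and $N=m^{c/\varepsilon}$ for a constant $c$ chosen below.

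The critical design choice turns each reward node into an AND-gadget over the element activations. The live-edge view of IC gives $P_{b_i}(S)\in\{0,1\}$ (equal to $1$ iff $b_i\in S$ or some seeded $a_j$ contains $u_i$). Because the $b_i\to c_l$ edges are mutually independent, for $c_l\notin S$ we have $P_{c_l}(S)=1-\prod_{i=1}^m(1-p\,P_{b_i}(S))$; with $p=1/2$ this equals $1-2^{-m}=\tau_{c_l}$ when every element is activated, and is at most $1-2^{-(m-1)}<\tau_{c_l}$ otherwise. Hence $c_l$ is cumulatively active if and only if every element is activated.

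In the YES case, seeding the $k$ cover-sets activates every element and therefore every reward, so $\rho(S^*)\ge N$. In the NO case, suppose the adversary uses $k_1$ set seeds, $k_2$ element seeds, and $k_3=k-k_1-k_2$ reward seeds; the set seeds cover at most $(1-1/e+\varepsilon_0)m$ elements, so at most $(1-1/e+\varepsilon_0)m+k_2$ elements are activated in total. In Feige's hard instances one can take $k=o(m)$, so at least one element remains inactive, no reward fires through the gadget, and only the $k_3\le k$ directly-seeded rewards are active. Thus $\rho(S^*)\le k+m=O(m)$, giving a gap of $\Omega(N/m)$. The total input size is $N_{\text{tot}}=O(mN)$ (dominated by the element-to-reward edges and the $O(m)$-bit threshold), so taking $c$ large enough yields $N/m\ge N_{\text{tot}}^{1-\varepsilon}$, completing the proof.

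The main obstacle is the AND-gadget: we must distinguish ``all $m$ elements active'' from ``at most $m-1$ active'' through a single linear threshold at $c_l$, and the threshold must be encoded in polynomially many bits. The choice $p=1/2$, $\tau_{c_l}=1-2^{-m}$ exactly pins down this boundary with $O(m)$ bits, but the margin is only $2^{-m}$, so we rely on exact probabilities rather than any approximation. A secondary issue is preventing the adversary from cheaply simulating a cover by seeding elements directly; this is precisely why the gap version of Set Cover is needed rather than its ordinary NP-hardness, since the $\Omega(m)$-uncovered gap in the NO case dwarfs the at most $k$ free element seeds.
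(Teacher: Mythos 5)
Your construction is essentially the paper's: a three-layer graph with probability-$1$ set-to-element edges, probability-$1/2$ element-to-reward edges, reward threshold $1-2^{-m}$ turning each reward node into an AND-gadget over all elements, and polynomially many reward nodes to manufacture the $N^{1-\varepsilon}$ gap; the arithmetic on the number of rewards versus input size works out the same way. Where you genuinely diverge is the source problem and how you neutralize ``cheating'' seed sets that seed element nodes directly. The paper reduces from the plain decision version of \textsc{Set Cover} and handles mixed seeds with a replacement argument: any element seed can be swapped for a set node containing that element without decreasing the set of activated elements, so in a NO instance (no $k$ sets cover $W$) no seed set of size $k$ of any kind can activate all elements --- note that a cover by $k_1$ sets plus $k_2$ singletons would yield a cover by $k_1+k_2\le k$ sets, contradicting the NO case. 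You instead invoke Feige's $(1-1/e)$ gap hardness of Max $k$-Coverage so that the $\Omega(m)$ shortfall swamps the at most $k$ element seeds; this is correct but strictly heavier machinery, and your parenthetical claim that the gap version ``is needed'' is not right --- ordinary NP-hardness suffices once you make the replacement observation. Your route also leans on the unproven (though standard, fixable by duplicating elements) assertion that $k=o(m)$ in the hard instances, whereas the paper's argument needs no such condition. In short: same gadget and same gap mechanism, but the paper's reduction is more elementary and self-contained, while yours trades a one-line WLOG for a quantitatively explicit NO case at the cost of a stronger hardness theorem and an extra padding step.
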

\OnlyInShort{
The proof of Theorem \ref{thm:maxhardness} is shown in our full version.
}
\OnlyInFull{
\begin{proof}
Similar to the proof of inapproximability result in \cite{kempe2003maximizing},
 We construct a reduction from SET COVER problem.
 The input of the SET COVER problem includes a ground set $W=\{w_1,~w_2,~\ldots,~w_n\}$, a collection of
subsets $S_1,S_2,\ldots,S_m\subset W$, and a positive integer $k<m$.
 The question is whether there exists $k$ subsets whose union is $W$.

Given an instance of the set cover problem, we construct an instance of IM-CA problem as follows:
There are three types of nodes, {\em set} nodes, {\em element} nodes, and {\em dummy} nodes.  There is a set node $u$ corresponding to each set, an element node $v$ corresponding to each element,
and a directed edge $(u,~v)$ with activation probability $p_{uv}=1$ if the element represented
by $v$ is belong to the set represented by $u$ and $p_{uv}=0$ otherwise. There are $n^c$
dummy nodes $x_1, x_2, \cdots, x_{n^c}$ (where $c=2\lceil {1\over \epsilon}\rceil+\lceil {\log m \over \log n} \rceil+1$), and there is a directed edge $(v, x)$
for each $v$ and $x$. The activation probability on $(v, x)$ is $p_{vx}=1/2$.
 The activation thresholds of set nodes, element nodes and dummy nodes are $\tau_u=\tau_v=1,~\tau_x=1-\frac{1}{2^n}$, respectively.
The budget of the size of a seed set is $k$ and the target set is all nodes. Notice that the input size of our IM-CA problem is $N=n^c+n+m$, so $N^{1-\epsilon}<\frac{2n^c}{N^\epsilon}\leq \frac{n^c}{n+k}$.

Under our construction, if there exists a collection of $k$ sets covering all elements in $W$ for SET COVER problem,
 then in IM-CA problem, the node set corresponding to the collection denoted by $C$
 will cumulatively activate all element nodes and all dummy nodes.
 In total, there will be $n^c+n+k$ nodes become cumulatively active.
On the other hand, let's consider the case if there is no set cover of size $k$.
Again we can assume all the seeds are selected from set nodes, since as a candidate for seeds, set nodes are more efficient than element nodes and dummy nodes. 
 Thus, if there is no set cover of size $k$, then we cannot find $k$ seeds which activate all the element nodes, hence none of the dummy notes are activated. Therefore, the total number of nodes cumulatively activated are no more than $n+k$.
 It follows that if a polynomial algorithm can approximate IM-CA problem within $N^{1-\epsilon}$, 
 then we can answer the decision problem of the SET COVER problem in polynomial time, this is impossible under the assumption P$\neq$ NP.
\end{proof}
}
\section{\leftline {Efficient Heuristic Algorithms}}\label{sec:heuristicalgorithms}
In Section \ref{sec:algorithmsandhardness}, we prove that both SM-CA with $\eta<n$ and IM-CA are hard to approximate.
Despite this difficulty, in this section we present efficient heuristic algorithms based on the greedy strategy,
	in order to tackle the problem
	in practice.
 We first show the outline of our greedy strategies in Section \ref{sec:greedy algorithm based on MC}.
  In Section \ref{sec:improved gerrdy algorithms based on RRS}, we adopt an efficient method to design scalability greedy algorithms.

\subsection{\leftline {Greedy Strategies}}\label{sec:greedy algorithm based on MC}
In this section, we introduce two possible greedy strategies for SM-CA problem and IM-CA problem.

From Section \ref{sec:eta=n}, we know that greedy by the surrogate function $f(S)=\sum_{u\in V} \min\{P_u(S), \tau_u\}$ can guarantee good approximation ratio for SM-CA problem with $\eta=n$.
Thus, intuitively we could adopt $f(S)$ as our surrogate objective even when $\eta < n$ and apply the greedy strategy based on $f(S)$.
However, our initial experiments demonstrate that directly adopting $f(S)$ is less effective, especially when seed set size is relative small.
We believe that this is because greedy on $f(S)$ would prefer larger increment of $P_u(S)$ far below $\tau_u$ over smaller increment
	of $P_u(S)$ close to $\tau_u$, but the latter actually provides new cumulative activations.
To guide seed selection towards the latter case, we generalize $f(S)$ to $F(S)=\sum_{u\in V} \min\{P_u(S), c\tau_u\}$ by introducing an additional parameter $c$.

A large $c$ reduces the difficulty of lifting $P_u(S)$ over the threshold $\tau_u$ when it is getting close to $\tau_u$, but
	it continuously rewards $P_u(S)$ above $\tau_u$, while a $c$ close to $1$ has the reverse effect.
Essentially $c$ balances between the truncated surrogate $f(S)$ (when $c=1$) and the expected influence function $\sigma(S)$ (when
	$c$ is large).
Thus, our first greedy strategy is to use $F(S)$ with a proper tuned $c$ as the greedy objective,
	and we call it the {\em balanced truncation greedy (BTG)} strategy.

The second strategy is to apply greedy on the objective function $\rho(S)$ directly.
That is, we select the node with the largest increment to $\rho(S)$ in each step.
However, since $\rho(S)$ is a discrete rounding function, there could be many nodes having the same effect (or no effect at all)
	in any step.
For tie-breaking, we select nodes according to $f(S)$, which is equal to $\sigma(S)$ under this situation.
In summary, the second strategy preferentially selects nodes promoting $\rho(S)$ most, then chooses the node contributing to $f(S)$ most
	among nodes having the same promotion to $\rho(S)$.
In this strategy, the objective function (i.e. $\rho(S)$) plays a dominant role in selecting seeds. We call it the
	{\em activation dominance greedy (ADG)} strategy.



During the process of greedy algorithms, we need to estimate $P_u(S)$ for each node $u\in V$.
 It will be very expensive if we do this estimation by Monte Carlo simulations.
 Specially, by Lemma \ref{lem:error estimate}, we need to simulate $O(n^2\ln n)$ times to guarantee the accuracy,
  each simulation takes $O(m)$ time in the worst case.
 Thus, it takes $O(n^2m\ln n)$ for each node $u$ to estimate $P_u(S)$.
To improve the efficiency, we adopt a new approach named \emph{reverse reachable ret (RR set)},
	as we describe in the next section.


\subsection{Greedy Algorithms Based on RR Set}\label{sec:improved gerrdy algorithms based on RRS}
In this section, we present our efficient algorithms based on RR set.
We first introduce the background of RR set.
 RR set was first proposed by Borgs et al. in 2014 \cite{borgs2014rrset}
	 to provide the first near-linear-time algorithm for the
	 classical influence maximization problem in \cite{kempe2003maximizing}.
The approach is further optimized later in a series of follow-up work \cite{tang2015rrset,tang2014newrrset,mtai2016sigmod}.
  The definition of RR set is as follows:
\begin{definition}[Reverse Reachable Set]
Let $u$ be a node in $G$, and $g$ be a random graph obtained by independently removing each edge $e=(v,w)$ in $G$
with probability $1-p_e$. The {\em reverse reachable set (RR set)} for $u$ is the set of nodes in $g$ that can reach $u$.
\end{definition}

 Borgs et al. established a crucial connection between RR set and the influence propagation process on $G$. We restatement it in Lemma \ref{lem:$p(active)=p(overlap)$}.

\begin{lemma}[\cite{borgs2014rrset}] \label{lem:$p(active)=p(overlap)$}
Let $S$ be a seed set and $u$ be a fixed node. Suppose $R_u$ is an RR set for $u$ generated from $G$,
 then $P_u(S)$ equals the probability that $S$ overlaps with $R_u$, that is,
\begin{center}
$
P_u(S)=\Pr(S\cap R_u\neq \emptyset)$.
\end{center}
\end{lemma}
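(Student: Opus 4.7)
The plan is to leverage the classical \emph{live-edge graph} characterization of the independent cascade model due to Kempe, Kleinberg, and Tardos. First, I would recall that the IC diffusion from a seed set $S$ is equivalent in distribution to the following two-step process: (i) sample a random subgraph $g$ of $G$ by independently retaining each edge $e=(v,w)$ with probability $p_e$ and deleting it with probability $1-p_e$; (ii) declare a node $u$ active if and only if there exists a directed path in $g$ from some seed in $S$ to $u$. This coupling holds because each edge $(v,w)$ is sampled at most once in the IC process (the one chance a newly active $v$ gets to influence $w$), independently across edges, with the same probability $p_e$, and the order in which these samples are revealed does not affect which nodes are eventually reachable from $S$.

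Given this equivalence, I would write
\begin{equation*}
P_u(S) \;=\; \Pr_{g}\!\left[\,\exists\, v \in S \text{ with a directed path } v \rightsquigarrow u \text{ in } g\,\right],
\end{equation*}
where the probability is over the random subgraph $g$.

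Next, I would observe that the RR set $R_u$ is, by its very definition, generated by sampling a random subgraph $g$ with the \emph{same} edge-retention probabilities $p_e$ and then collecting the nodes that can reach $u$ in $g$, i.e.\ $R_u = \{v \in V : v \rightsquigarrow u \text{ in } g\}$. Hence, on the same sample $g$, the event ``$\exists\, v \in S$ with $v \rightsquigarrow u$ in $g$'' coincides exactly with the event ``$S \cap R_u \neq \emptyset$.'' Since both probabilities are induced by the same distribution over subgraphs, we conclude $P_u(S) = \Pr(S \cap R_u \neq \emptyset)$.

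The only delicate point, and thus the sole potential obstacle, is the live-edge equivalence in step (i)–(ii): one must be careful to note that even though the IC model reveals the outcomes of edge coin-flips in a temporal, seed-dependent order, the identity of the set of activated nodes is determined entirely by the multigraph of outcomes, so one may equivalently pre-sample all edges in advance. Once this well-known fact is invoked (with a citation to \cite{kempe2003maximizing}), the remainder of the argument is a one-line set-theoretic identification between reachability in $g$ from $S$ and intersection of $S$ with the reverse-reachable set $R_u$.
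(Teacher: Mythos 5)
Your proof is correct. The paper itself offers no proof of this lemma---it is explicitly a restatement of a result from \cite{borgs2014rrset} used as a black box---and your argument is precisely the standard one from that line of work: couple the IC process with the live-edge subgraph $g$ (citing \cite{kempe2003maximizing} for the fact that pre-sampling all edge coin flips does not change the distribution of the activated set), then observe that ``$u$ is activated from $S$'' and ``$S\cap R_u\neq\emptyset$'' are the very same reachability event in the same random $g$, so nothing further needs comparison.
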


Now we introduce our new method to estimate $P_u(S)$ for each node $u\in V$.
We first generate $\theta$ RR sets for $u$ independently. Let $\mathcal R_u$ be the collection of all generated RR sets for $u$.
 For any node set $S$, let $\mathcal {F}_{\mathcal R_u}(S)$ be the fraction of RR sets in ${\mathcal R_u}$ overlapping with $S$. That is, $\mathcal {F}_{\mathcal R_u}(S)\triangleq |\{R_u\in \mathcal{R}_u: R_u\cap S\neq \emptyset\}|/\theta.$
Then for any $u\in V$, we use $\mathcal {F}_{\mathcal R_u}(S)$ as the estimation of $P_u(S)$.
 We can prove that we can bound the estimation error if $\theta$ is large enough.
 \OnlyInShort{
 The proof of Lemma \ref{lem:rrsetestimate} is also shown in our full version \cite{fullversion}.
}
\begin{lemma}\label{lem:rrsetestimate}
For any $\varepsilon>0$, if $\theta$ satisfies $\theta\geq \ln (2n)/2\varepsilon^2$, then for each node $u\in V$,
\begin{center}
$\Pr[|\mathcal {F}_{\mathcal R_u}(S)-P_u(S)|\geq \varepsilon]\leq n^{-1}$.
\end{center}
\end{lemma}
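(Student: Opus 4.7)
The plan is to reduce the claim to a standard Hoeffding tail bound on an average of i.i.d.\ Bernoulli indicators. The key observation supplied by Lemma~\ref{lem:$p(active)=p(overlap)$} is that for a single random RR set $R_u$, the event $\{S \cap R_u \neq \emptyset\}$ has probability exactly $P_u(S)$. Thus, the construction of $\mathcal R_u$ by sampling $\theta$ independent RR sets $R_u^{(1)}, \ldots, R_u^{(\theta)}$ yields $\theta$ independent Bernoulli trials with common mean $P_u(S)$.

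Concretely, I would define indicator random variables $Y_i = \mathbb{I}[S \cap R_u^{(i)} \neq \emptyset]$ for $i = 1, \ldots, \theta$, noting that by Lemma~\ref{lem:$p(active)=p(overlap)$} each $Y_i$ is Bernoulli with mean $P_u(S)$, and the $Y_i$ are mutually independent because the underlying random graphs used to generate the RR sets are drawn independently. By the definition of $\mathcal{F}_{\mathcal R_u}(S)$, we have $\mathcal{F}_{\mathcal R_u}(S) = \frac{1}{\theta}\sum_{i=1}^{\theta} Y_i$, and its expectation equals $P_u(S)$.

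Applying Hoeffding's inequality to the bounded i.i.d.\ sum gives
\begin{equation*}
\Pr\bigl[\,|\mathcal{F}_{\mathcal R_u}(S) - P_u(S)| \geq \varepsilon\,\bigr] \leq 2\exp(-2\theta\varepsilon^2).
\end{equation*}
Substituting the hypothesis $\theta \geq \ln(2n)/(2\varepsilon^2)$ yields $2\exp(-2\theta\varepsilon^2) \leq 2\exp(-\ln(2n)) = 1/n$, which is the desired bound.

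There is no real obstacle here: the lemma is a textbook application of Hoeffding once the crucial probabilistic identity from Lemma~\ref{lem:$p(active)=p(overlap)$} is in hand. The only subtlety worth mentioning explicitly in the write-up is that the $Y_i$'s are genuinely independent (this requires the RR sets themselves, not just the underlying edge samples, to be produced by independent realizations of the random graph $g$), and that the bound is uniform in the choice of the query set $S$ because $P_u(S)$ is computed with respect to the same randomness that defines each $Y_i$.
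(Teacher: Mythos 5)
Your proposal is correct and follows essentially the same route as the paper's own proof: both define indicator variables for the events $\{S \cap R_u^{(i)} \neq \emptyset\}$, invoke Lemma~\ref{lem:$p(active)=p(overlap)$} to identify their common mean as $P_u(S)$, and apply Hoeffding's inequality to get the bound $2\exp(-2\theta\varepsilon^2) \leq n^{-1}$. The only cosmetic difference is that you work with the empirical average $\mathcal{F}_{\mathcal R_u}(S)$ directly while the paper scales up to the sum $X = \theta\mathcal{F}_{\mathcal R_u}(S)$, which changes nothing in substance.
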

\OnlyInFull{
\begin{proof}
 Let $X\triangleq\theta\mathcal {F}_{\mathcal R_u}(S)$, then $X$ is the number of RR sets in $\mathcal R_u$ overlapping with $S$.
 Moreover, $X$ can be regarded as the sum of $\theta$ $i.i.d.$ Bernoulli variables.
 Specifically, let $X=\sum_{i=1}^\theta X_i$ where $X_i=1$ if $S$ overlaps with the $i$-th RR set in $\mathcal R_u$ and $X_i=0$ otherwise.
 Based on Lemma \ref{lem:$p(active)=p(overlap)$}, we have $E[X]=\sum_{i=1}^\theta E[X_i]=\theta P_u(S)$.
By the Hoeffding's inequality and the condition $\theta\geq \ln (2n)/2\varepsilon^2$, we have
\begin{equation*}
\begin{aligned}
&\Pr(|\mathcal {F}_{\mathcal R_u}(S)-P_u(S)|\geq \varepsilon)=\Pr(|\theta\mathcal {F}_{\mathcal R_u}(S)-\theta P_u(S)|\geq \theta\varepsilon)\\
&=\Pr(|X-E[X]|\geq \theta\varepsilon)\leq 2\exp(-2(\theta\varepsilon)^2/\theta)\leq n^{-1}.
\end{aligned}
\end{equation*}
\end{proof}
}
We now present our greedy algorithms.
Recall that we use two greedy functions:
$F(S)=\sum_{u\in V}\min\{P_u(S),c\tau_u\}$ and
	$\rho(S)= \sum_{u\in V}\I\{P_u(S) \ge \tau_u \}$, where $\I$ is the indicator function.

In order to make it easier to understand, we describe the processes of selecting seeds in subprograms.
 We first present the framework of the whole greedy algorithm for IM-CA problem in Algorithm \ref{alg:Greedy Framework of IM-CA problem}.

  \begin{algorithm}[t]
\renewcommand{\algorithmicrequire}{\textbf{Input:}}
\renewcommand\algorithmicensure {\textbf{Output:}}
    \caption{Framework of greedy algorithm for IM-CA problem} \label{alg:Greedy Framework of IM-CA problem}
    \begin{algorithmic}[1]
    \REQUIRE ~~ $G=(V,~ E),~ \{p_{uv}\}_{(u,~v)\in E}, \{\tau_u\}_{u\in V},~ k, ~\theta$\\
    \ENSURE ~~  Seed set $S$\\
\STATE  set $S=\emptyset$ \label{algline:initialseedset}
\STATE  generate $\theta$ RR sets for each node $u\in V$: $\{\mathcal R_u\}_{u\in V}$ \label{algline:generaterrset}
\STATE  set $req(u)=\tau_u\theta $ for each node $u\in V$ \label{algline:initialreq}
\FOR  {$j=1$ to $k$} \label{algline:beginselect}
    \STATE  $x=$ SS($G, \{p_{uv}\}_{(u,v)\in E}, \{req(u)\}_{u\in V}, ~\{\mathcal R_u\}_{u\in V}$)\label{algline:callsubprogram}
    \STATE  /*SS is a general term of SSBT and SSAD*/
    \STATE  $S=S\cup \{x\}$ \label{algline:selectaseed}
    \STATE  remove all $RR$ Sets containing $x$ \label{algline:removerrset}
    \FOR  {each $u$ in $V$} \label{algline:beginupdatereq}
        \STATE  $rem(u)$: the number of $RR$ Sets removed from $\mathcal R_u$
        \STATE  $req(u)=req(u)-rem(u)$
    \ENDFOR\label{algline:endupdatereq}
\ENDFOR
\RETURN $S$
\end{algorithmic}
\end{algorithm}

 In Algorithm \ref{alg:Greedy Framework of IM-CA problem}, we first initialize for the seed set $S$ (line \ref{algline:initialseedset}).
     Then we generate $\theta$ RR sets for each node $u$ in $V$.
     Let $\mathcal R_u$ be the collection of RR sets for $u$.
     In line \ref{algline:initialreq}, $req(u)$ is the requirement of node $u\in V$, which
     is the number of RR sets in $\mathcal R_u$ that needs to be hit by a seed set so that $u$ can become
     cumulatively active. We say a set $S$ hits an RR set $R$ if $S\cap R\neq \emptyset$.
     Based on Lemma \ref{lem:rrsetestimate}, $u$ is cumulatively active only if there are at least $\theta \tau_u$ RR sets in $\mathcal R_u$ hit by the seed set.
     Thus, we set $req(u)=\tau_u\theta $ for each node in $u\in V$.

 At each step, we add a new node $x$ into the current seed set (line \ref{algline:callsubprogram}).
    After $x$ is selected, we need to remove all RR sets containing $x$ and update the requirements for all nodes.
    The algorithm ends when $|S|=k$.

  Note that Algorithm \ref{alg:Greedy Framework of IM-CA problem} needs to call the seed selecting procedures (line \ref{algline:callsubprogram}). Here, SS($\cdot$) is a general term for our two subprograms \selectbyf~(Selecting Seeds via Balanced Truncation strategy) and \selectbyrhof~(Selecting Seeds via Activation Dominance strategy).
Specifically, \selectbyf ~(Procedure \ref{alg:\selectbyf}) is the subprogram that selects one node with the largest marginal increment to $F(S)$ into the current seed set $S$.
\selectbyrhof ~(Procedure \ref{alg:SelectByrho}) is the subprogram selecting the node
	with the largest marginal increment to $\rho(S)$, with tie-breaking on $f(S)$.
 The algorithm calling \selectbyf~is named as \FMAX (Balanced Truncation Greedy algorithm for IM-CA problem) and the algorithm calling \selectbyrhof~is named as \RHOMAX (Activation Dominance Greedy algorithm for IM-CA problem ).

\begin{algorithm}[t]
\floatname{algorithm}{Procedure}
\renewcommand{\algorithmicrequire}{\textbf{Input:}}
\renewcommand\algorithmicensure {\textbf{Output:}}
    \caption{\selectbyf: Selecting Seeds via Balanced Truncation strategy} \label{alg:\selectbyf}
    \begin{algorithmic}[1]
    \REQUIRE  $G=(V, E),~\{p_{uv}\}_{(u, v)\in E},~\{req(u)\}_{u\in V},~\{\mathcal R_u\}_{u\in V}$\\
    \ENSURE   a new seed \\
    \STATE  set $inc(v)=0$ for all $v\in V$
    \FOR {each node $u\in V$ and $req(u)>0$} \label{sub1line:begincomputeinc}
        \FOR  {each node $v\in \bigcup\mathcal R_u$}
        \STATE  /*compute the marginal increment of $v$*/
            \STATE  $inc(v)=inc(v)+ \min\{overlap(v,~\mathcal R_u),~c\cdot req(u)\}$
        \ENDFOR
    \ENDFOR \label{sub1line:endcomputeinc}
    \STATE  select $x=\arg\max_v inc(v)$
    \RETURN  $x$
\end{algorithmic}
\end{algorithm}

Now we describe our two subprograms SSBT and SSAD.
    We first introduce SSBT in Procedure \ref{alg:\selectbyf}.
    Let $inc(v)$ be the value of the marginal increment generated by any node $v\in V$,  $overlap(v,\mathcal R_u)$ be the number
    of RR sets in $\mathcal R_u$ overlapping with node $v$.
    In the main loop of \selectbyf, we select the node providing the largest marginal increment to $F(S)$.
    To this end, for each node $v\in V$, we compute the marginal increment of $v$ to all nodes which are not cumulatively active yet.
    Based on Lemma \ref{lem:$p(active)=p(overlap)$},
    the marginal increment of a node $v$ to node $u$ can be measured by $\min\{overlap(v,~\mathcal R_u),~c\cdot req(u)\}$.
    Summing up the increments of $v$ on all not-yet cumulatively active nodes, we can obtain $inc(v)$ (see details in lines \ref{sub1line:begincomputeinc} - \ref{sub1line:endcomputeinc}).
    Then, we choose the node with the maximum $inc(v)$.

\begin{algorithm}[t]
\floatname{algorithm}{Procedure}
\renewcommand{\algorithmicrequire}{\textbf{Input:}}
\renewcommand\algorithmicensure {\textbf{Output:}}
    \caption{\selectbyrhof: Selecting Seeds via Activation Dominance strategy} \label{alg:SelectByrho}
    \begin{algorithmic}[1]
    \REQUIRE   $G=(V,E),~ \{p_{uv}\}_{(u,v)\in E},~ \{req(u)\}_{u\in V},~ \{\mathcal R_u\}_{u\in V}$\\
    \ENSURE   a new seed\\
    \STATE  set $inc(v)=0$ for all $v\in V$
    \FOR {each node $u\in V$ and $req(u)>0$} \label{sub2line:begincomputerhoinc}
        \FOR  {$v\in \bigcup\mathcal R_u$}
        \STATE  /*compute the marginal increment of $v$*/
            \STATE  $inc(v)=inc(v)+ \I\{overlap(v,~\mathcal R_u) \ge req(u)\}$
        \ENDFOR
    \ENDFOR \label{sub2line:endcomputerhoinc}
    \STATE  /*select one better node from nodes with the largest marginal increment*/
    \FOR {each node $u\in V$ and $req(u)>0$} \label{sub2line:begincomputefinc}
        \FOR  {$v\in \bigcup\mathcal R_u$ with the largest $inc(v)$ values}
            \STATE  $inc(v)=inc(v)+ \min\{overlap(v,~\mathcal R_u),~req(u)\}$
        \ENDFOR
    \ENDFOR \label{sub2line:endcomputefinc}
    \STATE  select $x=\arg \max_v inc(v)$
    \RETURN  $x$
\end{algorithmic}
\end{algorithm}

Another greedy strategy is shown in \selectbyrhof~(Procedure \ref{alg:SelectByrho}).
    In this procedure, we first find nodes with the largest marginal increment to $\rho(S)$.
    For any node $v$,  the marginal increasing $inc(v)$ can be denoted by:
    \begin{center}
     $\sum_{u:~req(u)>0}\I\{overlap(v, ~\mathcal R_u) \ge req(u) \}$
    \end{center}
   There may be many nodes with the same value of $inc(v)$ due to the truncation operation of $\rho(S)$.
   To break the tie, we choose the node with the maximum marginal increase to $f(S)$ among all nodes with the largest $inc$ value (lines \ref{sub2line:begincomputefinc} - \ref{sub2line:endcomputefinc}).

The framework of the whole greedy algorithm for SM-CA problem follows the same structure with Algorithm \ref{alg:Greedy Framework of IM-CA problem}.
    The only difference between these two algorithms is the stopping condition.
    For SM-CA, the algorithm stops when the number of cumulatively active nodes is no less than $\eta$.
    The corresponding algorithms of SM-CA problem are named as \FMIN (Balanced Truncation Greedy algorithm for SM-CA problem) and \RHOMIN (Activation Dominance Greedy algorithm for SM-CA problem).
    Other details of the algorithms are essentially the same as the algorithms for IM-CA and are thus omitted.

\textbf{Time complexity.} Now we analyze the time complexity of \FMAX and ADG-IM-CA.
 Let $EPT$ be the expected sum of in-degrees of all nodes in a random RR set,
	 which is the same as the expected time of generating an RR set.
 Thus, the total expected time of the generation is $O(n\theta\cdot EPT)$. By Lemma~\ref{lem:rrsetestimate},  $\theta=\ln (2n)/2\varepsilon^2$ is enough for accuracy.
 Thus, the expected generation time is $O(n\ln n\cdot EPT/\varepsilon^2)$.
 Besides the generation time, the main time cost depends on SS($\cdot$) since other operations only take time $O(n)$.
 For each node $u\in V$, let $EPTV_u$ be the expected number of nodes in $\bigcup \mathcal R_u$ and $EPTV=\frac{1}{n}\sum_{u\in V}EPTV_u$.
 Then, both \selectbyf~and \selectbyrhof~takes time $O(n\cdot EPTV)$ in expectation.
 Hence, the expected time complexity of both \FMAX and \RHOMAX is  $O(n(k EPTV+\ln n\cdot EPT/\varepsilon^2))$.

 At each step of \FMIN and \RHOMIN, the number of cumulatively active nodes increases at least 1 since the selected seed contributes 1 to $\rho(\cdot)$, which means the times of the outer loop is at most $\eta$.
 Thus, the expected time cost of \FMIN and \RHOMIN is
 $O(n(\eta EPTV+\ln n\cdot EPT/\varepsilon^2))$.

\section{Experiments}\label{sec:experiments}

In order to test the performance of our heuristic algorithms, we conduct experiments on real social networks.
Our experiments are run on a machine with a 2.4GHz Intel(R) Xeon(R) E5-2670 CPU, 2 processors (16 cores), 64GB memory and Red Hat Enterprise Linux Server release 6.3 (64bit). All algorithms tested in this paper are written in C++ and compiled with g++ 4.8.4.

\subsection{Experiment setup}

\textbf{Datasets.} We use three real-world networks in our experiments: Flixster, NetPHY and DBLP. Table \ref{table:datasets} shows the datasets used in our experiments, in which AOD denotes the average out degree of a dataset.
\begin{table}[h]
\caption{Datasets}
\begin{tabular}{|c|c|c|c|c|}
  \hline
  {\small Name} & {\small \# Node} & {\small \#Edge} & {\small Type} & {\small AOD}\\ \hline
   Flixster & 29K & 174K & directed & 6.0\\ \hline
  NetPHY &  37K  & 348K & undirected & 18.8\\ \hline
  DBLP & 655k & 2M & undirected & 6.1\\
  \hline
\end{tabular}
\label{table:datasets}
\end{table}
\OnlyInFull{
The first network is Flixster, which is an American movie rating social site for discovering new movies.
In the Flixster graph, each node represents a user and a directed edge $e=(u,~ v)$ represents that
$u$ and $v$ rate the same movie and $v$ rates the movie shortly after $u$.
We simply use one specific topic in this network with 29357 nodes and 174939 directed edges.
And we learn the active probabilities on edges by using the Topic-aware Independent Cascade Model presented in \cite{barbieri2012topic}. The mean of edge probabilities is 0.118 and the standard deviation is 0.025.

The second one, called NetPHY,
is the same as the one used in \cite{chen2010sharpphard,goyal2012minimizing,simpath}.
It is an academic collaboration network extracted from the ``Physics" section from
arXiv(http:// www. arXiv.org). The nodes in NetPHY are authors and undirected edges represent coauthorship relations.
We use data from year 1991 to year 2003 which includes 37154 nodes and 348322 edges.
The influence probabilities on edges are assigned by weighted cascade model \cite{kempe2003maximizing}.
Specifically, for each edge $(u, v)\in E$, we set $p_{uv}=c(u,v)/d(v)$, where $d(v)$ is the number of published papers of author $v$ and $c(u,v)$ is the number of papers that $u$ and $v$ collaborated.
In this network, the mean of edge probabilities is $0.107$ and the standard deviation is $0.025$.

The last one is a larger collaboration network DBLP maintained by Michael Ley (654628 nodes and 2056186 edges).
The method of generating edge probability is the same as that in NetPHY.
We follow the TRIVALENCY model \cite{chen2012scalable} to assign edge probabilities:
On every edge $e=(u, v)$, we uniformly select a probability from the set $\{0.1, 0.01, 0.001\}$ at random,
which corresponds to high, medium and low influences.
Under the above method, the mean of edge probabilities is 0.069 and the standard deviation is 0.002.
}

\textbf{Algorithms.}
We test our algorithms using both BTG and ATG strategies. For comparison, we use the following baseline algorithms.
%
%
%
\begin{itemize}
\item {\sf TIM$^+$:} {\sf TIM$^+$} is a greedy algorithm presented in \cite{tang2014newrrset}. The basic greedy rule is to choose the node covers the maximum number of Random RR sets (see more details in \cite{tang2014newrrset}).
\item {\sf High-degree:} {\sf High-degree} generates seed set sequence by the decreasing order of the out-degree of nodes. It is popular to consider high degree nodes as influential nodes in social and other networks.
\item {\sf PageRank:}
It is the popular algorithm used for ranking web pages \cite{pagerank}. The transition probability on edge
$e=(v,u)$ is $p_{uv}/\sum_{w:(w,v)\in E}p_{wv}$. In the {\sf PageRank} algorithm, higher $p_{uv}$ indicates $u$ is more influential to $v$
and thus $v$ should vote $u$ higher.
This is the reason we relate $p_{uv}$ with the transition probability of the reverse link $(v,u)$, which is the same as in earlier studies such as~\cite{chen2012scalable}.
We use 0.15 as the restart probability and use the power method to compute the PageRank values. The stopping criteria of computing PageRank values is when two consecutive iterations are different for at most $10^{-4}$ in $L_1$ norm. We select seeds by decreasing order of the PageRank values.
\item {\sf Random:} As a trivial baseline, {\sf Random} selects seeds sequence in random order.
\end{itemize}

For all the above algorithms, we use the same Monte Carlo method to compute the number of cumulative active nodes. The stopping criteria of IM-CA problem and SM-CA problem are the number of seeds is $k$ and the number of cumulative active nodes exceeds $\eta$, respectively.

\textbf{Parameters.} In our experiment, we take two simplified operations: First, we set parameters $\theta=1000$, $\varepsilon=0.1$ in Flixster and NetPHY, this setting is enough for the requirement in Lemma \ref{lem:rrsetestimate}. In DBLP, we set $\theta=500$, in practice, this is good enough for illustrating our results.
We also suppose that all nodes have the same activation threshold $\tau$.
In our experiments, $\tau$ ranges from $0.1$ to $0.9$.

\begin{figure*}[t]
\centering
	\includegraphics[width=0.4\textwidth]{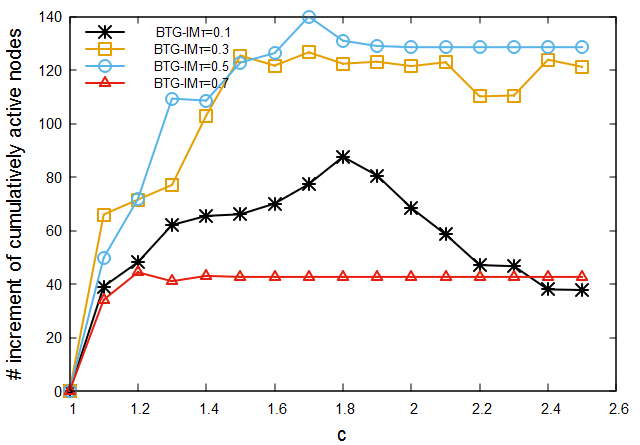}
   \caption{Results of $c$ on Flixster}	
\label{fig:flixster_c_max}
\end{figure*}

In order to determine a proper value of parameter $c$ for the BTG strategy,
	we fix $k=500$ on Flixster, then we implement \FMAX on different $\tau$ and $c$.
 Figure \ref{fig:flixster_c_max} shows the result. In order to present more clearly, we set the ordinate as the number of cumulatively active nodes minus the corresponding number under $c=1$.
 In general, we can see that $\tau$ greater than $1$ yields better result, and $\tau$ between $1.6$ and $1.8$ provides close-to-best results in all cases.
   In the rest, we choose $c=1.7$ for all tests (except tests on SM-CA, which involves a large
   number of seeds).
	Tests on other datasets yield similar results.


\subsection{Experiment results}

\textbf{Experiment results on spreading performance of IM-CA problem.}
\OnlyInFull{
Figure~\ref{fig:flixstermax}, Figure \ref{fig:NetPHYmax} and Figure \ref{fig:DBLPmax} show the comparison of different algorithms on Flixster, NetPHY and DBLP respectively.
}
\OnlyInShort{
Figure \ref{fig:Results of IM-CA on different datasets} shows the comparison of different algorithms on Flixster, NetPHY and DBLP.
}
These figures reflect the spreading performances with different $\tau$, varying $k$ from 1 to 500.
\OnlyInShort{
\begin{figure*}[h]
	\centering
	\subfigure[Flixster $\tau=0.1$]
	{
		\label{fig:flixster_0.1_max}
		\includegraphics[width=0.3\textwidth]{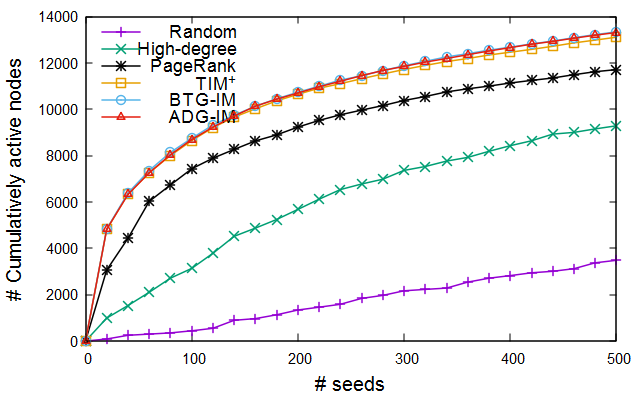}
	}
   \subfigure[NetPHY $\tau=0.1$]
	{
		\label{fig:NetPHY_0.1_max}
		\includegraphics[width=0.3\textwidth]{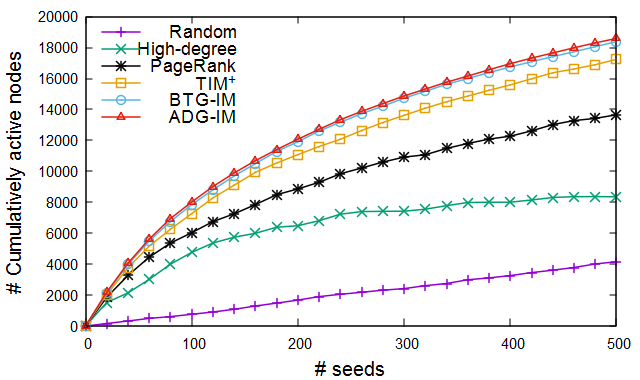}
	}
   \subfigure[DBLP $\tau=0.1$]
	{
		\label{fig:DBLP_0.1_max}
		\includegraphics[width=0.3\textwidth]{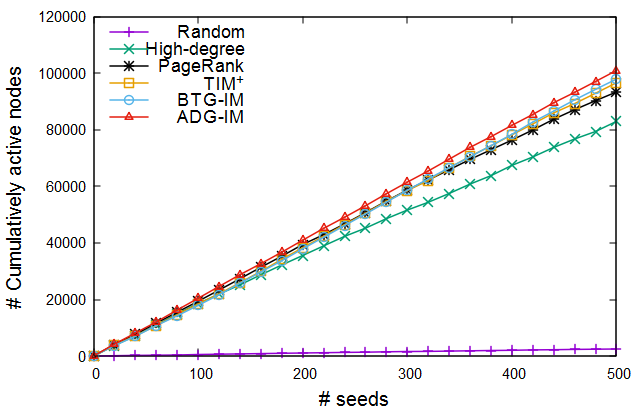}
	}
    \subfigure[Flixster $\tau=0.3$]
	{
		\label{fig:flixster_0.3_max}
		\includegraphics[width=0.3\textwidth]{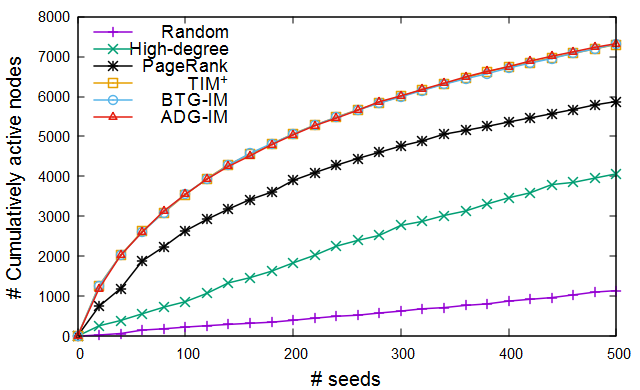}
	}
    \subfigure[NetPHY $\tau=0.3$]
	{
		\label{fig:NetPHY_0.3_max}
		\includegraphics[width=0.3\textwidth]{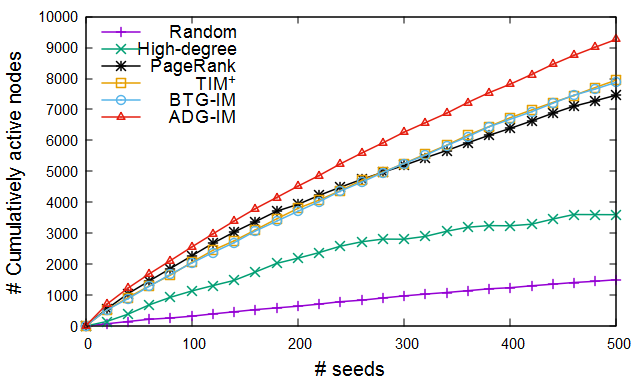}
	}
    \subfigure[DBLP $\tau=0.3$]
	{
		\label{fig:DBLP_0.3_max}
		\includegraphics[width=0.3\textwidth]{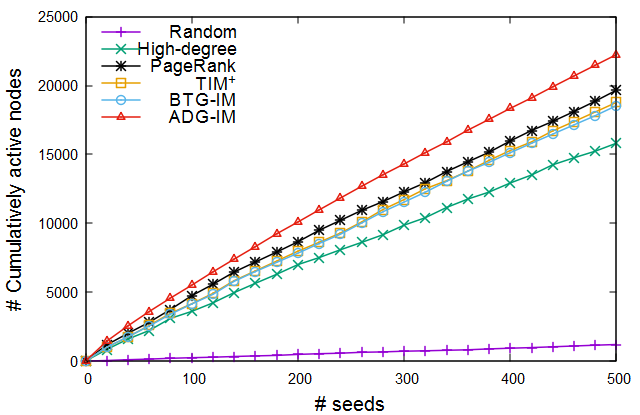}
	}
    \subfigure[Flixster $\tau=0.5$]
	{
		\label{fig:flixster_0.5_max}
		\includegraphics[width=0.3\textwidth]{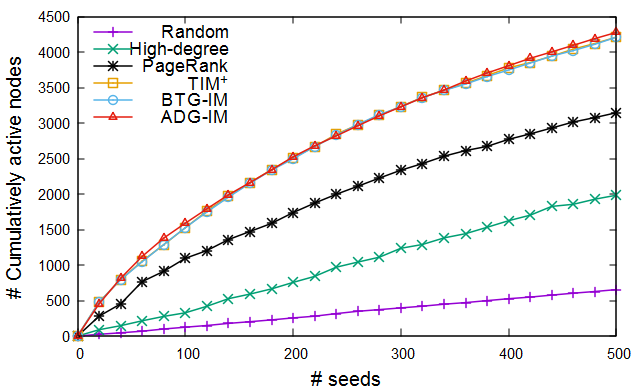}
	}
    \subfigure[NetPHY $\tau=0.5$]
	{
		\label{fig:NetPHY_0.5_max}
		\includegraphics[width=0.3\textwidth]{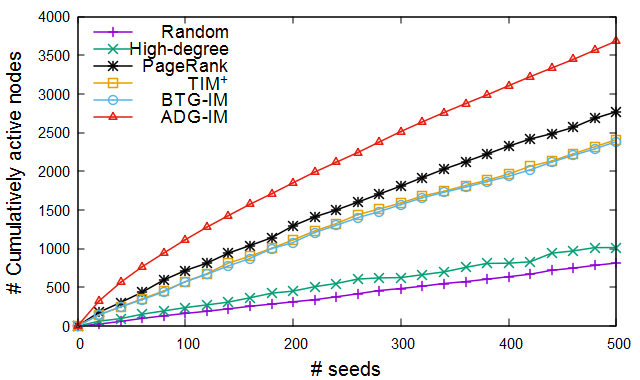}
	}
    \subfigure[DBLP $\tau=0.5$]
	{
		\label{fig:DBLP_0.5_max}
		\includegraphics[width=0.3\textwidth]{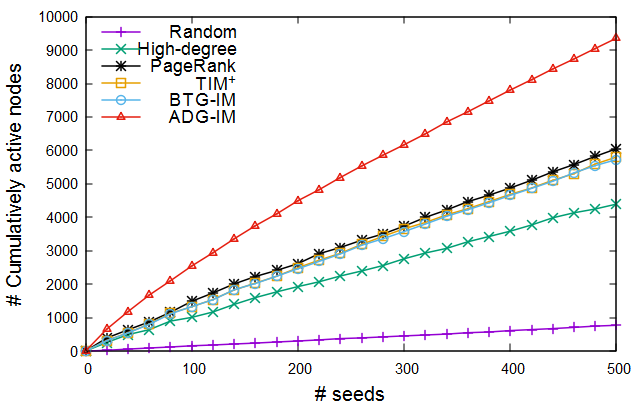}
	}
    \subfigure[Flixster $\tau=0.7$]
	{
		\label{fig:flixster_0.7_max}
		\includegraphics[width=0.3\textwidth]{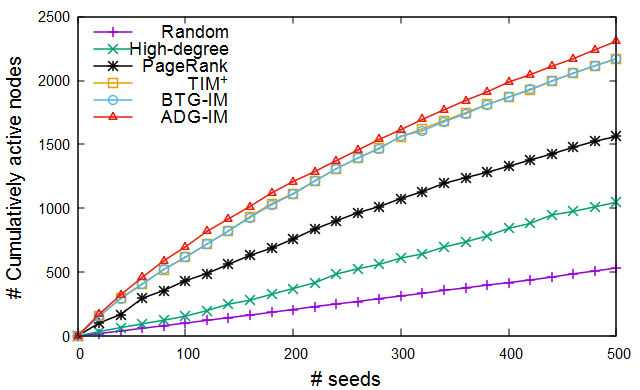}
	}
    \subfigure[NetPHY $\tau=0.7$]
	{
		\label{fig:NetPHY_0.7_max}
		\includegraphics[width=0.32\textwidth]{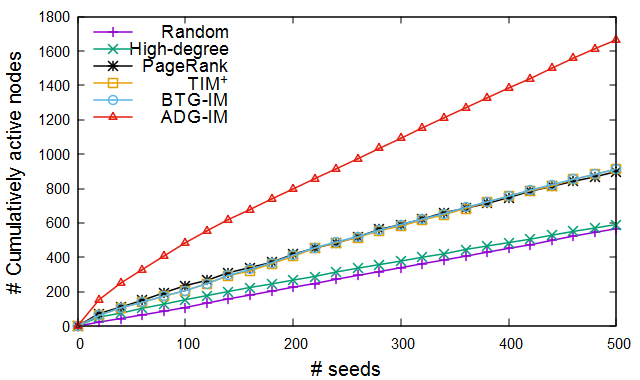}
	}
    \subfigure[DBLP $\tau=0.7$]
	{
		\label{fig:DBLP_0.7_max}
		\includegraphics[width=0.32\textwidth]{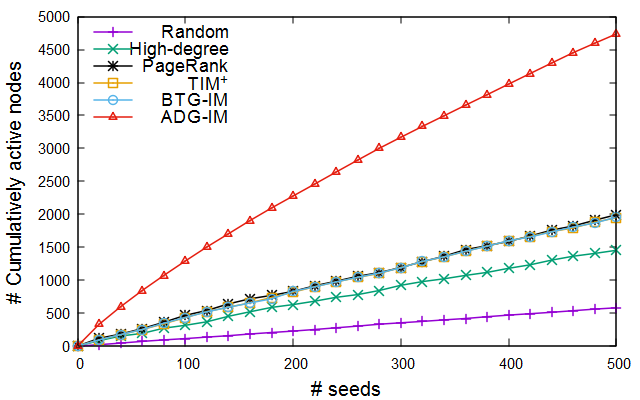}
	}
	\caption{Results of IM-CA on different datasets}
	\label{fig:Results of IM-CA on different datasets}
\end{figure*}
}

\OnlyInFull{
\begin{figure*}[ht!]
	\centering
	\subfigure[Flixster $\tau=0.1$]
	{
		\label{fig:flixster_0.1_max}
		\includegraphics[width=0.3\textwidth]{flixster_1_max500.png}
	}
    \subfigure[Flixster $\tau=0.3$]
	{
		\label{fig:flixster_0.3_max}
		\includegraphics[width=0.3\textwidth]{flixster_3_max500.png}
	}
    \subfigure[Flixster $\tau=0.5$]
	{
		\label{fig:flixster_0.5_max}
		\includegraphics[width=0.3\textwidth]{flixster_5_max500.png}
	}
    \subfigure[Flixster $\tau=0.7$]
	{
		\label{fig:flixster_0.7_max}
		\includegraphics[width=0.3\textwidth]{flixster_7_max500.png}
	}
    \subfigure[Flixster $\tau=0.9$]
	{
		\label{fig:flixster_0.9_max}
		\includegraphics[width=0.25\textwidth]{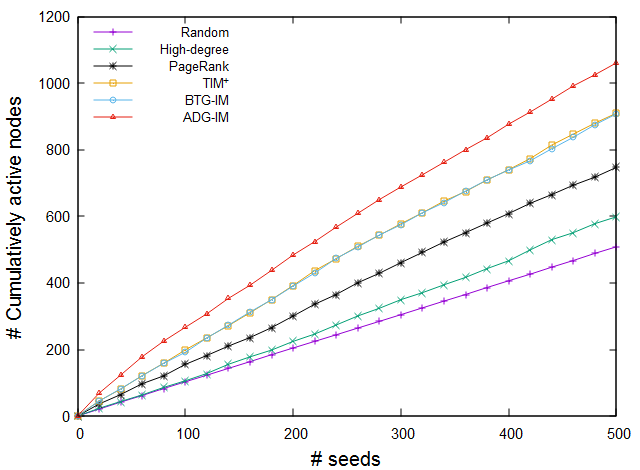}
	}
	\caption{Results of IM-CA on Flixster}
	\label{fig:flixstermax}
\end{figure*}

\begin{figure*}[!ht]
	\centering
	\subfigure[NetPHY $\tau=0.1$]
	{
		\label{fig:NetPHY_0.1_max}
		\includegraphics[width=0.3\textwidth]{phy_1_max500.png}
	}
    \subfigure[NetPHY $\tau=0.3$]
	{
		\label{fig:NetPHY_0.3_max}
		\includegraphics[width=0.3\textwidth]{phy_3_max500.png}
	}
    \subfigure[NetPHY $\tau=0.5$]
	{
		\label{fig:NetPHY_0.5_max}
		\includegraphics[width=0.3\textwidth]{phy_5_max500.png}
	}
    \subfigure[NetPHY $\tau=0.7$]
	{
		\label{fig:NetPHY_0.7_max}
		\includegraphics[width=0.3\textwidth]{phy_7_max500.png}
	}
   \subfigure[NetPHY $\tau=0.9$]
	{
		\label{fig:NetPHY_0.9_max}
		\includegraphics[width=0.28\textwidth]{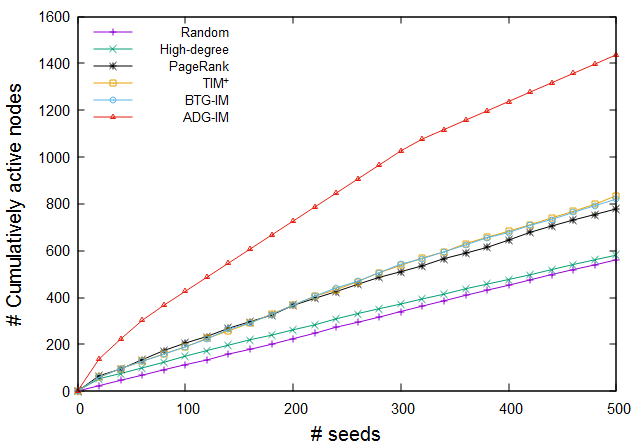}
	}
	\caption{Results of IM-CA on NetPHY}
	\label{fig:NetPHYmax}
\end{figure*}

\begin{figure*}[!ht]
	\centering
	\subfigure[DBLP $\tau=0.1$]
	{
		\label{fig:DBLP_0.1_max}
		\includegraphics[width=0.3\textwidth]{dblp_1_max500.png}
	}
    \subfigure[DBLP $\tau=0.3$]
	{
		\label{fig:DBLP_0.3_max}
		\includegraphics[width=0.3\textwidth]{dblp_3_max500.png}
	}
    \subfigure[DBLP $\tau=0.5$]
	{
		\label{fig:DBLP_0.5_max}
		\includegraphics[width=0.3\textwidth]{dblp_5_max500.png}
	}
    \subfigure[DBLP $\tau=0.7$]
	{
		\label{fig:DBLP_0.7_max}
		\includegraphics[width=0.3\textwidth]{dblp_7_max500.png}
	}
    \subfigure[DBLP $\tau=0.9$]
	{
		\label{fig:DBLP_0.9_max}
		\includegraphics[width=0.25\textwidth]{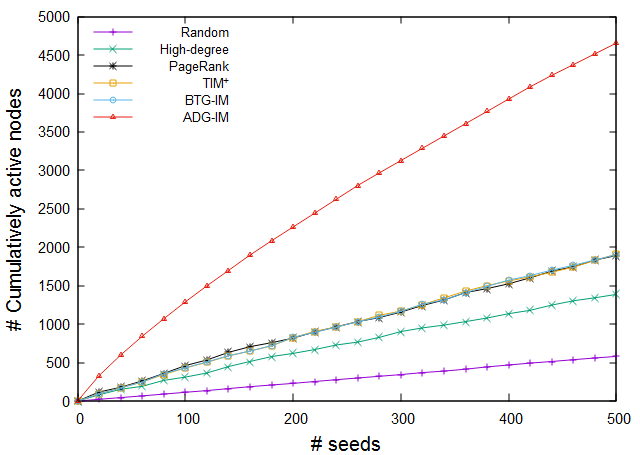}
	}
	\caption{Results of IM-CA on DBLP}
	\label{fig:DBLPmax}
\end{figure*}
}
Figures on these three datasets reflect some common features. Firstly, the performances of {\sf \FMAX} and {\sf TIM$^+$} are similar when $\tau$ is large. This is because when $c=1.7$, $c\tau>1$ if $\tau>0.59$. In this case, {\sf \FMAX} is exactly the same as {\sf TIM$^+$}. While  $\tau$ is smaller, {\sf \FMAX} is better than {\sf TIM$^+$} sometimes. For example, on NetPHY with $\tau=0.1$ and $k=500$ (Figure \ref{fig:NetPHY_0.1_max}) {\sf \FMAX} cumulatively activates nodes with size 7.6\% more than those cumulatively activated by {\sf TIM$^+$}.

Secondly, {\sf \RHOMAX} performs similar to {\sf TIM$^+$} and {\sf \FMAX} when $\tau$ is small. However, the curves of {\sf \RHOMAX} and {\sf TIM$^+$} become separated with the increase of $\tau$ on all datasets. {\sf \RHOMAX} outperforms all other algorithms significantly when $\tau$ is large.
In particular, on NetPHY with $\tau=0.7$ and $k=500$ (Figure \ref{fig:NetPHY_0.7_max}), the number of nodes cumulatively activated by {\sf \RHOMAX} is 90.8\% more than that by {\sf TIM$^+$}, 89.5\% more than {\sf \FMAX}, 88.8\% more than {\sf PageRank}, 190.2\% more than {\sf High-degree}, 227.0\% more than {\sf Random}.
For the two million edges DBLP dataset, when $\tau=0.7$ and $k=500$ (Figure \ref{fig:DBLP_0.7_max}), {\sf \RHOMAX} cumulatively activates nodes with size 162.0\% more than those cumulatively activated by {\sf TIM$^+$}, 162.0\% more than {\sf \FMAX}, 157.0\% more than {\sf PageRank}, 246.7\% more than {\sf High-degree}, 801.1\% more than {\sf Random}.
We think this feature is mainly because the cumulative activation is easy when $\tau$ is small, thus, the seed set generated by {\sf TIM$^+$} is likely to cumulatively activate enough nodes.
However, when $\tau$ is large, most nodes are not easy to be cumulatively activated. In this case, selecting seeds directly contribute to $\rho(S)$ may be more effective.
\OnlyInFull{
\textbf{Experiment results on the activation threshold $\tau$.}
To see the change of cumulative activation influence size with the increase of parameter $\tau$, we also conduct experiments on different $\tau$ on NetPHY (Figure~\ref{fig:phy_tau}).
\begin{figure}[h]
	\centering
	\subfigure[NetPHY $k=240$]
	{
		\label{fig:phy_tau_240}
		\includegraphics[width=0.3\textwidth]{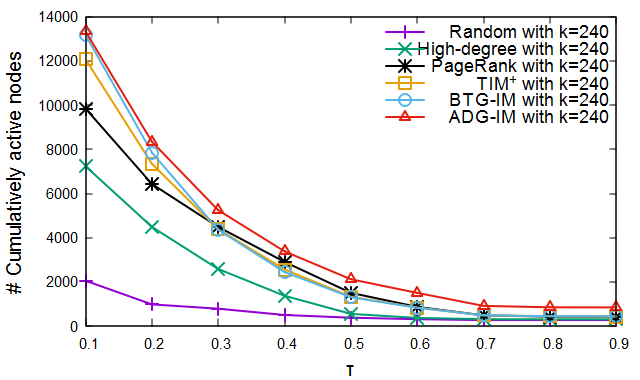}
	}
	\subfigure[NetPHY $k=500$]
	{
		\label{fig:phy_tau_500}
		\includegraphics[width=0.3\textwidth]{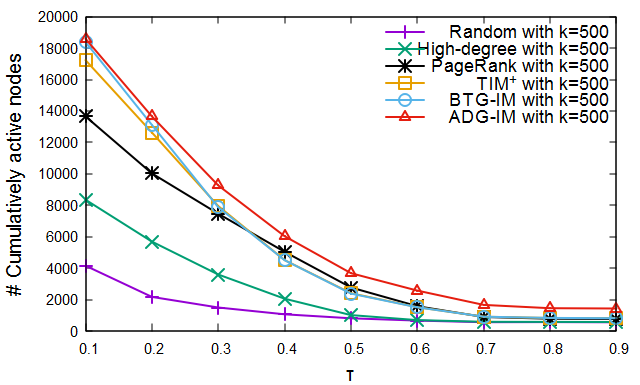}
	}
	\caption{\small Results of activation threshold $\tau$ on NetPHY}
	\label{fig:phy_tau}
\end{figure}
}
From these figures, we observe that {\sf \RHOMAX} is the best algorithm for all settings of $\tau$ and seed set size $k$.
With the increase of $\tau$, in all algorithms, the size of the cumulatively active nodes decreases rapidly.
 This is consistent with the previous description.

\textbf{Experiment results on spreading performance of SM-CA problem.}
For seed minimization problem, algorithms {\sf Random}, {\sf High-degree} and {\sf PageRank} will output a very large seed set
	to meet the target requirement $\eta$ and thus are very ineffective.
This is already demonstrated in the previous test results, and thus we only focus on the performance
	comparison  of {\sf TIM$^+$}, {\sf \FMIN} and {\sf \RHOMIN} for the SM-CA problem.
We also clarify that when $\eta$ is small, the results of SM-CA problem can be reflected by the results of IM-CA problem since we adopt the same strategies for these two problems. Thus, we only present the result on NetPHY with large enough values of $\eta$ (Figure~\ref{fig:phy_min}).
Furthermore, for large $\eta$, we notice that setting $c>1$ for the BTG strategy is no longer beneficial,
	perhaps because
	with a large number of seeds, the needs to penalize over-the-top influence (setting $c=1$) out-weighs the need of
	compensating near-the-top influence (setting $c>1$).
	It also coincides with Theorem~\ref{thm:approratio} for the case of $\eta=n$,
	so we set $c=1$ for this test.
From Figure~\ref{fig:phy_min}, we can see that {\sf \RHOMIN}outperforms {\sf TIM$^+$} and {\sf \FMIN} significantly.

\begin{figure*}[!ht]
	\centering
	\subfigure[NetPHY $\tau=0.1$]
	{
		\label{fig:phy_0.1_min}
		\includegraphics[width=0.4\textwidth]{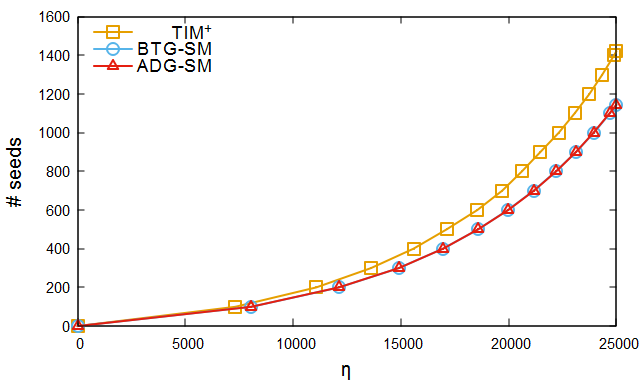}
	}
	\subfigure[NetPHY $\tau=0.3$]
	{
		\label{fig:phy_0.3_min}
		\includegraphics[width=0.4\textwidth]{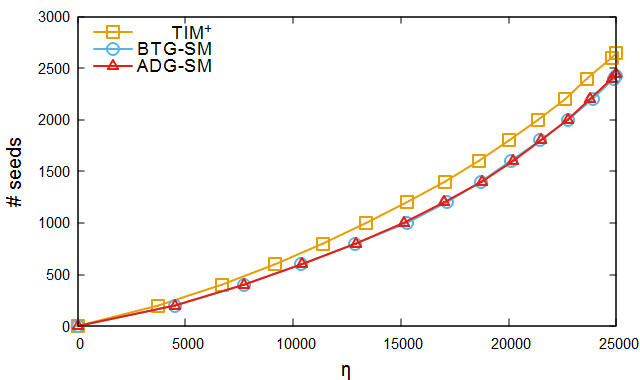}
	}

	\subfigure[NetPHY $\tau=0.5$]
	{
		\label{fig:phy_0.5_min}
		\includegraphics[width=0.4\textwidth]{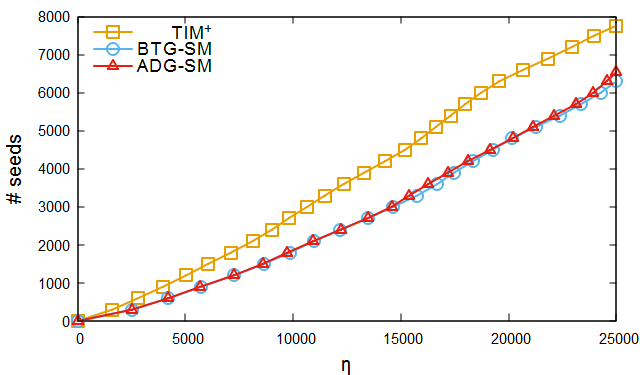}
	}
	\subfigure[NetPHY $\tau=0.7$]
	{
		\label{fig:phy_0.7_min}
		\includegraphics[width=0.4\textwidth]{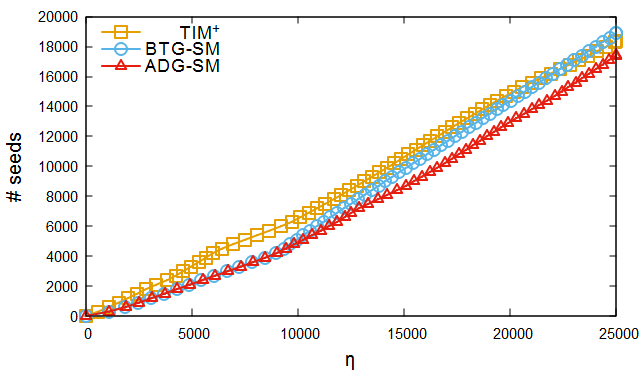}
	}
	\caption{Results of SM-CA on NetPHY}
	\label{fig:phy_min}
\end{figure*}


\textbf{Running Time.} We compare the running times of {\sf \RHOMAX}, {\sf \FMAX} and {\sf TIM$^+$} in Table \ref{table:running time}.
 We set $\tau=0.3$ and $k=500$ on all datasets.
 The results indicate that TIM$^+$ runs faster than our ADG and BTG strategies, although for the two relatively small datasets
	 the gap is not that much.
 We believe that this is because TIM$^+$ employs an estimation for the optimal influence spread with $k$ seeds for the
 influence maximization task, but we are not able to do so for the IM-CA task
 because we need to estimate individual nodes active probability
 $P_u(S)$. Nevertheless, our algorithm can still be scaled to the large graph of DBLP with millions of edges.
 Further improving the efficiency while preserving the same level of quality is a future work item.

\begin{table}[!ht]
\caption{Running Time ($\tau=0.3$, $k=500$)}
    \begin{tabular}{|c|c|c|c|}
    \hline
 &  {\sf TIM$^+$} & {\sf \RHOMAX} & {\sf \FMAX} \\ \hline
      Flixster& 39s & 87s & 138s \\ \hline
      NetPHY& 54s & 112s  & 142s \\ \hline
      DBLP & 509s & 8865s & 8685s \\ \hline
    \end{tabular}
    \label{table:running time}
\end{table}

\textbf{Conclusion.} From these experiment results,
	we conclude that ADG consistently provides the best performance cross all test cases, and thus we propose
	{\sf \RHOMAX}and {\sf \RHOMIN}as our solution to the IM-CA and SM-CA problems, respectively.
The BTG strategy performs well in some cases (such as small $\tau$ for IM-CA), but its performance is not stable, and it also
	requires tuning $c$ for different cases, and thus is less desirable than the ADG strategy.
{\sf TIM$^+$} has similar performance as the BTG strategy, but is even more unstable across the tests, and thus is not competitive
	comparing to ADG for both the SM-CA and IM-CA problems.

\section{Conclusion and Future work}
In this paper, we propose the cumulative activation influence model to reflect realistic scenarios where user adoption is based
	on repeated exposure to multiple information cascades in the network, which is different from the most existing study where
	user adoption is only based on a single cascade.
We study both the seed minimization and influence maximization problems in the cumulative activation setting, providing both
	theoretical hardness results and approximation algorithms, and further propose efficient heuristic solutions despite the
	theoretical hardness result.
Our experimental results demonstrate the effectiveness of our proposed solutions.

Our current study focuses on dealing with multiple information cascades for product adoption, but there are certainly many chances
	to further elaborate the model, such as mixing the information cascades and adoption cascades and study the impact to the
	optimization problems.
Another direction is to circumvent the theoretical hardness for general graphs due to nonsubmodularity
	and develop efficient algorithms with guarantees on
	the class of graphs that are closer to real-world graphs.

%
\bibliographystyle{abbrv}
\bibliography{inf-ca}

\end{document}